\newcommand{\defarrow}{\stackrel{\mathrm{def.}}{\Leftrightarrow}}
\newcommand{\cmplx}{\mathbb{C}}
\newcommand{\tr}{\mathrm{tr}}
\newcommand{\calL}{\mathcal{L}}
\newcommand{\cH}{\mathcal{H}}
\newcommand{\cK}{\mathcal{K}}
\newcommand{\LH}{\mathcal{L} (\mathcal{H}) }
\newcommand{\LK}{\mathcal{L}  (\mathcal{K})}
\newcommand{\s}{\mathrm{s}}
\newcommand{\ts}{\tilde{\mathrm{s}}}
\newcommand{\cstar}{$C^\ast$}
\newcommand{\A}{\mathcal{A}}
\newcommand{\B}{\mathcal{B}}
\newcommand{\C}{\mathcal{C}}
\newcommand{\D}{\mathcal{D}}
\newcommand{\Mn}{\mathbb{M}_n}
\newcommand{\M}{\mathcal{M}}
\newcommand{\N}{\mathcal{N}}
\newcommand{\vph}{\varphi}
\newcommand{\Ss}{\mathcal{S}_{\sigma}}
\newcommand{\zent}{\mathfrak{Z}}
\newcommand{\cpchset}[2]{\mathbf{Ch}^{\mathrm{CP}} (#1 \to #2)}
\newcommand{\ncpchset}[2]{\mathbf{Ch}^{\mathrm{CP}}_\sigma (#1 \to #2)}
\newcommand{\ncpch}[1]{\mathbf{Ch}^{\mathrm{CP}}_\sigma  (#1)}
\newcommand{\schset}[2]{\mathbf{Ch}^{\mathrm{Sch}} (#1 \to #2)}
\newcommand{\nschset}[2]{\mathbf{Ch}^{\mathrm{Sch}}_\sigma (#1 \to #2)}
\newcommand{\nsch}[1]{\mathbf{Ch}^{\mathrm{Sch}}_\sigma  (#1)}
\newcommand{\atensor}{\otimes_{\mathrm{alg}}}
\newcommand{\ntensor}{\overline{\otimes}}
\newcommand{\maxtensor}{\otimes_{\mathrm{max}}}
\newcommand{\mintensor}{\otimes_{\mathrm{min}}}
\newcommand{\gtensor}{\otimes_{\gamma}}
\newcommand{\id}{\mathrm{id}}
\newcommand{\unit}{\mathds{1}}
\newcommand{\cocp}{\preccurlyeq_{\mathrm{CP}}}
\newcommand{\eqcp}{\sim_{\mathrm{CP}}}
\newcommand{\concp}{\preccurlyeq_{ \mathrm{CP}_\sigma}}
\newcommand{\eqncp}{\sim_{\mathrm{CP}_\sigma}}
\newcommand{\norm}[1]{\lVert #1 \rVert}
\newcommand{\maxnorm}[1]{\norm{ #1 }_{\mathrm{max}}}
\newcommand{\minnorm}[1]{\norm{ #1 }_{\mathrm{min}}}
\newcommand{\gnorm}[1]{\norm{ #1 }_{\gamma}}
\newcommand{\rth}{\rho_\theta}
\newcommand{\phth}{\varphi_\theta}
\newcommand{\psth}{\psi_\theta}
\newcommand{\psx}{\psi_\xi}
\newcommand{\E}{\mathcal{E}}
\newcommand{\F}{\mathcal{F}}
\newcommand{\seE}{(\M , \Theta , (\varphi_\theta)_{\theta \in \Theta} )}
\newcommand{\seEz}{(\M_0 , \Theta , (\varphi^{(0)}_\theta)_{\theta \in \Theta} )}
\newcommand{\seF}{(\N , \Theta , (\psi_\theta)_{\theta \in \Theta} )}
\newcommand{\seFx}{(\N , \Xi , (\psi_\xi)_{\xi \in \Xi} )}
\newcommand{\seFxz}{(\N_0 , \Xi , (\psi_\xi^{(0)})_{\xi \in \Xi} )}
\newcommand{\phthz}{\varphi_\theta^{(0)}}
\newcommand{\phtho}{\varphi_\theta^{(1)}}
\newcommand{\pstho}{\psi_\theta^{(1)}}
\newcommand{\thin}{\theta \in \Theta}
\newcommand{\xin}{\xi \in \Xi}
\newcommand{\cl}{\mathrm{cl}}
\newcommand{\clp}{\Subset}
\newcommand{\CP}{\mathrm{CP}} %completely positive
\newcommand{\Sch}{\mathrm{Sch}} %Schwarz
\newcommand{\alg}{\mathrm{alg}} %algebraic
\newcommand{\bin}{\mathrm{bin}} %binormal
\newcommand{\lnor}{\mathrm{lnor}}%left-normal
\newcommand{\condi}{\mathbb{E}}
\begin{document}

\title{Accessible information without disturbing partially known quantum states on 
a von Neumann algebra
\thanks{%Grants or other notes
%about the article that should go on the front page should be
%placed here. General acknowledgments should be placed at the end of the article.
This work was supported
by the National Natural Science Foundation of China (Grants No.~11374375 and
No.~11574405).
}
}
%\subtitle{Do you have a subtitle?\\ If so, write it here}

\titlerunning{Accessible information without disturbing partially known quantum states}        % if too long for running head

\author{
Yui Kuramochi
}

%\authorrunning{Short form of author list} % if too long for running head

\institute{Yui Kuramochi \at
School of Physics and Astronomy, Sun Yat-Sen University (Zhuhai Campus),
Zhuhai 519082, China\\
%              Tel.: +123-45-678910\\
%              Fax: +123-45-678910\\
\email{yui.tasuke.kuramochi@gmail.com}           %  \\
%             \emph{Present address:} of F. Author  %  if needed
}

\date{Received: date / Accepted: date}
% The correct dates will be entered by the editor

\maketitle

\begin{abstract}
This paper addresses the problem of how much information we can extract
without disturbing a statistical experiment, 
which is a family of 
partially known normal states on a von Neumann algebra.
We define the classical part of a statistical experiment
as the restriction of the equivalent minimal sufficient statistical experiment
to the center of the outcome space,
which, in the case of density operators on a Hilbert space,
corresponds to the classical probability distributions
appearing in the maximal decomposition
by Koashi and Imoto
[Phys.~Rev.~A \textbf{66}, 022318 (2002)].
We show that we can access by a Schwarz or completely positive channel
at most the classical part of a statistical experiment
if we do not disturb the states.
We apply this result to the broadcasting problem of a statistical experiment.
We also show that the classical part of the direct product of 
statistical experiments is the direct product of the classical parts of 
the statistical experiments.
The proof of the latter result is based on the theorem
that the direct product of minimal sufficient statistical experiments 
is also minimal sufficient.
\keywords{minimal sufficiency\and
tensor products of operator algebras\and
classical part of statistical experiment\and
direct product of statistical experiments
}
% \PACS{PACS code1 \and PACS code2 \and more}
\subclass{%
81P45%Quantum information, communication, networks
\and
46L53%Noncommutative probability and statistics
\and
62B15%  	Theory of statistical experiments
\and
47L90%  	Applications of operator algebras to physics
}
\end{abstract}

\section{Introduction}
\label{sec:intro}

One of the fundamental feature of quantum theory 
is the impossibility of extracting information without disturbing 
unknown quantum states.
We can see this characteristic, for example,
from the no-cloning theorem~\cite{Park1970,wootters1982single,DIEKS1982271}
or 
more general no-broadcasting theorem~\cite{PhysRevLett.76.2818,PhysRevLett.99.240501},
which states that cloning or broadcasting operation can be realized
only for restricted family of quantum states.

Recently, the cloning and broadcasting conditions were considered 
in general operator algebraic framework~\cite{Kaniowski2015,kaniowski_2017},
in which the mean ergodic theorem 
for von Neumann algebras~\cite{ISI:A1979HD47500016}
plays a fundamental role. % is a proper tool
With the help of the mean ergodic theorem,
we can also establish the existence of 
a minimal sufficient statistical experiment 
equivalent to a given 
operator algebraic statistical experiment \cite{kuramochi2017minimal}.
If a statistical experiment is a family of density operators on a Hilbert space, 
the equivalent minimal sufficient statistical experiment corresponds to
the maximal decomposition by 
Koashi and Imoto~\cite{PhysRevA.66.022318}.

We can regard the broadcasting as a special class of 
operations that do not disturb a family of partially known states.
A typical example of such an operation 
other than the broadcasting is as follows: 
if a family of density operators
$(\rth)_{\thin}$ commutes with 
a complete set of projections $(P_j) ,$
the corresponding projective measurement does not disturb the states
$(\rth)_{\thin} .$

This motivates us to ask the following question:
how much information can we extract without disturbing 
a family of partially known quantum states?
The present paper addresses this problem in the von Neumann algebra framework.
We show that if we do not disturb a statistical experiment, 
we can access at most what we call the classical part of the statistical experiment,
even if we allow non-classical outcome spaces.
We mention that this problem was also considered 
in \cite{PhysRevA.66.022318}
for finite-dimensional density operators.

%%%%%long summary%%%%%%%%%
This paper is organized as follows.
After mathematical preliminaries in Section~\ref{sec:prel},
we introduce in Section~\ref{sec:accessible} the classical part
of a statistical experiment,
which is defined as the restriction of 
the equivalent minimal sufficient statistical experiment
to the center of the outcome space (Definition~\ref{defi:clpart}).
We show that we can access at most the classical part
by a Schwarz channel
without disturbing a given statistical experiment (Theorem~\ref{theo:main}).
From this, the no-broadcasting theorem immediately follows 
(Section~\ref{subsec:nb}, Corollary~\ref{coro:nb}).
In Section~\ref{subsec:KI}, 
we consider the case of density operators
on a (possibly infinite-dimensional, or even non-separable) Hilbert space
and find that the classical part in this case corresponds to 
the classical probability distributions appearing in the maximal decomposition
in \cite{PhysRevA.66.022318}.
In Section~\ref{sec:direct}, we show that 
the classical part of the direct product of two statistical experiments
coincides with the direct product of the classical parts of the statistical experiments
(Theorem~\ref{theo:directcl}).
The proof of Theorem~\ref{theo:directcl} is 
based on Theorem~\ref{theo:directms}
which states that 
the direct product of two minimal sufficient statistical experiments 
is also minimal sufficient.
Finally, Section~\ref{sec:conclusion} concludes the paper.

\section{Preliminaries}\label{sec:prel}
In this section, we introduce mathematical preliminaries 
on operator algebras, channels between them, 
and (minimal sufficient) statistical experiment.
For a general reference of operator algebras, we refer
\cite{takesakivol1}.

\subsection{States and channels on operator algebras.}
\label{subsec:ch}
Throughout this paper, we only consider 
$\ast$-, \cstar-, and von Neumann algebras
with unit elements in the multiplications.
The unit element of a $\ast$-algebra $\A$
is denoted by 
$\unit_\A .$
We denote by $\id_\A$ the identity map on a $\ast$-algebra $\A $
and by $\Mn (\A)$ the set of $n \times n$ matrices 
with entries from $\A .$
We denote by 
$\zent (\A)$
the center
$\set{ Z \in \A | AZ = ZA   \, (\forall A \in \A)} $
of $\A .$
The algebra of bounded operators on a Hilbert space 
$\cH$ is denoted by $\LH .$

A positive linear functional on a \cstar-algebra $\A$ 
satisfying the normalization condition 
$\vph (\unit_\A) =1$
is called a state on $\A .$
A state $\vph$ on a von Neumann algebra $\M$
is called normal if $\vph (\sup_i A_i ) = \sup_i \vph (A_i)$ 
for any monotonically increasing bounded net 
$(A_i)$ on $\M .$
A state on $\M$ is normal if and only if $\vph$ is continuous 
in the ultraweak topology 
on $\M .$
The sets of ultraweakly continuous linear functionals and
normal states on $\M$ are denoted by
$\M_\ast$
and
$\Ss (\M) ,$ respectively.
The support 
$\s (\vph)$
of 
$\vph \in \Ss (\M)$
is the minimal projection 
$\s (\vph) \in \M$
satisfying
$\vph (\s (\vph))=1.$
For each $\vph \in \Ss ( \M )$ and each $A \in \M ,$
we have
\[
	\vph (A)
	=
	\vph (\s (\vph) A \s (\vph) ) .
\]
For positive $A \in \M ,$
$\vph (A) = 0$ implies $\s(\vph)A \s(\vph) = 0.$

%%%%%enveloping vN algebra%%%%%%%%
Let $\A$ be a \cstar-algebra and let $ \pi_\A$ 
be the universal representation of $\A $
acting on the Hilbert space $\cH_\A ,$
i.e.\
$\pi_\A \colon \A \to \calL (\cH_\A)$ 
is the direct sum of the GNS representations
taken over all the states on $\A .$
The von Neumann algebra $\pi_\A (\A)^{\prime\prime} ,$
where the prime denotes the commutant,
is called the enveloping von Neumann algebra of $\A$
and denoted by $\A^{\ast\ast} .$
The enveloping von Neumann algebra $\A^{\ast\ast}$
is, as a Banach space, 
isometrically isomorphic to the double dual of $\A .$
Every linear functional $\phi \in \A^\ast$
extends to a linear functional 
$\overline{\phi} \in (\A^{\ast\ast})_\ast ,$
i.e.\
$\overline{\phi} $ is the ultraweakly continuous linear functional 
on $\A^{\ast\ast}$ satisfying
$\phi = \overline{\phi} \circ \pi_\A .$
The enveloping von Neumann algebra satisfies the following 
universal property:
for each representation 
$\pi \colon \A \to \LK$ 
acting on the Hilbert space $\cK ,$
there exists a normal representation 
$\overline{\pi} \colon \A^{\ast\ast} \to \LK$
that is an extension of $\pi,$
i.e.\
$\pi = \overline{\pi} \circ \pi_\A .$
If there is no confusion, we identify $\A$
with the \cstar-subalgebra $\pi_\A (\A) $ 
of $\A^{\ast\ast}  .$

Next, we introduce channels.
In this paper, we consider channels whose outcome spaces are 
algebraic tensor products of von Neumann algebras and 
do not in general have unique \cstar-norms.
For this reason, we slightly generalize the notion of channels as follows.
Let $\A$ be a $\ast$-algebra,
let $\B$ be a \cstar-algebra,
and
let $\Lambda \colon \A \to \B$ a linear map.
$\Lambda$ is called unital if
$\Lambda (\unit_\A) = \unit_\B .$
For an integer $n \geq 1 ,$
$\Lambda$ is called $n$-positive if
\[
	\sum_{i,j =1}^n
	B_i^\ast
	\Lambda (A_i^\ast A_j)
	B_j
	\geq 0
\]
for each $A_i \in \A$ and each $B_i \in \B $
$(i=1,\dots , n) .$
$\Lambda$ is called completely positive (CP)
if $\Lambda$ is $n$-positive for every positive integer $n .$
A unital map $\Lambda \colon \A \to \B $
is called Schwarz if 
$\Lambda$ satisfies
\begin{gather*}
	\Lambda (A^\ast) = \Lambda (A)^\ast ,
	\\
	\Lambda (A^\ast A) \geq \Lambda (A^\ast) \Lambda (A)
\end{gather*}
for all $A \in \A .$
The Schwarz condition is equivalent to the following single matrix inequality:
\[
	\begin{pmatrix}
	\Lambda  (A^\ast A) & \Lambda (A^\ast) \\
	\Lambda (A) & \unit_\B
	\end{pmatrix}
	\geq 0.
\]
A unital Schwarz map is positive, i.e.\ $1$-positive.
A unital positive map 
$\Lambda \colon \A \to \B$
is called a channel (in the Heisenberg picture).
The codomain $\B$ and domain $\A$ of a channel
$\Lambda \colon \A \to \B$
are called the input and outcome spaces, or algebras, 
of $\Lambda ,$
respectively.
A channel 
$\Lambda \colon \A \to \B$
maps a state $\phi $ on the input space $\B$ 
to the normalized positive linear 
functional $\phi \circ \Lambda$ on $\A .$
The sets of CP and Schwarz channels from $\A$ to $\B$
are denoted by 
$\cpchset{\A}{\B}$
and
$\schset{\A}{\B} ,$
respectively.
As in the case of channels with outcome 
\cstar-algebras~\cite{choi1974},
we can show that
any $2$-positive (and therefore any CP) channel $\Lambda \colon \A \to \B$
is a Schwarz channel.
If either $\A$ or $\B$
is a commutative \cstar-algebra,
a linear map $\Lambda \colon \A \to \B $
is CP if and only if $\Lambda$ is positive.

Let $\Lambda \in \schset{\A}{\B}$ be a Schwarz channel
with input \cstar-algebra $\B$ and outcome $\ast$-algebra $\A .$
We define the \emph{multiplicative domain} of $\Lambda$ by
\[
	\M_\Lambda 
	:=
	\set{A \in \A | 
	\Lambda (A^\ast A )= \Lambda (A^\ast) \Lambda (A), \,
	\Lambda (A A^\ast )= \Lambda (A) \Lambda (A^\ast)
	} .
\]
The following lemma for the multiplicative domain of a Schwarz channel
can be shown analogously as in ~\cite{doi:10.1142/S0129055X11004412}
(Lemma~3.9).
\begin{lemma}
\label{lemm:mdomain}
Let $\A$ be a $\ast$-algebra, let $\B$ be a \cstar-algebra,
and let $\Lambda \in \schset{\A}{\B}$ be a Schwarz channel.
Then for $A \in \A ,$
$A \in \M_{\Lambda} $ if and only if
$\Lambda (BA) = \Lambda (B) \Lambda (A)$
and 
$\Lambda (AB) = \Lambda (A) \Lambda (B)$
for all $B \in \A .$
\end{lemma}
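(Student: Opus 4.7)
The plan is to establish the nontrivial implication by a polarization argument applied to the Schwarz inequality, since the converse direction is immediate: setting $B = A^\ast$ in each of the hypothesised identities recovers exactly the two defining equations of $\M_\Lambda$.

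For the forward direction, I would fix $A \in \M_\Lambda$ and an arbitrary $B \in \A$, and substitute $X := A + \lambda B$ into $\Lambda(X^\ast X) \geq \Lambda(X)^\ast \Lambda(X)$ for $\lambda \in \cmplx$. Expanding both sides and cancelling the term $\Lambda(A^\ast A) - \Lambda(A)^\ast \Lambda(A) = 0$, I obtain
\[
\bar{\lambda}\, T + \lambda\, T^\ast + |\lambda|^2 S \geq 0,
\]
where $T := \Lambda(B^\ast A) - \Lambda(B)^\ast \Lambda(A)$ and $S := \Lambda(B^\ast B) - \Lambda(B)^\ast \Lambda(B) \geq 0$ (the latter being the Schwarz defect for $B$). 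Note that $T^\ast = \Lambda(A^\ast B) - \Lambda(A)^\ast \Lambda(B)$ by $\Lambda(X^\ast) = \Lambda(X)^\ast$, which is why the two cross terms in the expansion package neatly as $\bar\lambda T + \lambda T^\ast$.

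The core step is to deduce $T = 0$. Decomposing $T = T_1 + i T_2$ with self-adjoint $T_1, T_2 \in \B$ and specialising $\lambda = t \in \realn$, the inequality becomes $2 t T_1 + t^2 S \geq 0$. For $t > 0$ this reads $2 T_1 + t S \geq 0$; for $t < 0$ division by $t$ reverses it to $2 T_1 + t S \leq 0$. Letting $t \to 0$ in each case and using norm-closedness of the positive cone in $\B$ forces $T_1 = 0$. Choosing $\lambda = i t$ does the same for $T_2$, so $T = 0$ and $\Lambda(B^\ast A) = \Lambda(B)^\ast \Lambda(A)$ for every $B$. Replacing $B$ by $B^\ast$ and invoking $\Lambda(B^\ast) = \Lambda(B)^\ast$ yields $\Lambda(B A) = \Lambda(B) \Lambda(A)$.

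The companion identity $\Lambda(A B) = \Lambda(A) \Lambda(B)$ is obtained by running the same polarization on $Y := A^\ast + \lambda B^\ast$: this time the second multiplicative-domain hypothesis $\Lambda(A A^\ast) = \Lambda(A) \Lambda(A)^\ast$ cancels the leading term, and the analogous limit extracts $\Lambda(A B^\ast) = \Lambda(A) \Lambda(B)^\ast$, whence replacing $B$ by $B^\ast$ finishes. I anticipate no serious obstacle; the only place requiring genuine care is the order-theoretic limit argument that delivers $T = 0$ from a one-parameter family of operator inequalities, and this uses only norm-closedness of $\B_+$. That is precisely why the proof of Lemma~3.9 in~\cite{doi:10.1142/S0129055X11004412} transfers verbatim to the present setting in which the outcome algebra $\A$ need carry only a $\ast$-structure.
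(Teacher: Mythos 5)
Your proof is correct; the paper itself omits the argument and simply cites Lemma~3.9 of \cite{doi:10.1142/S0129055X11004412}, whose proof is exactly this polarization of the Schwarz inequality for $X=A+\lambda B$ followed by the $t\to 0$ limit in the norm-closed positive cone of $\B$. Your closing observation that all inequalities and limits live in the \cstar-algebra $\B$, so that the domain $\A$ need only carry a $\ast$-structure, is precisely why the paper can invoke that lemma ``analogously'' in this slightly more general setting.
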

From Lemma~\ref{lemm:mdomain}, the multiplicative domain
$\M_\Lambda$ 
%of a Schwarz channel $\Lambda$
is a unital $\ast$-subalgebra of $\A .$

%normal CP map
Let $\M$ and $\N$ be von Neumann algebras.
A channel $\Lambda \colon \M \to \N $ is called normal 
if it is continuous in the ultraweak topologies of $\M$ and $\N ,$
respectively.
The set of normal CP (respectively, Schwarz) channels 
from $\M$ to $\N$ is denoted by 
$\ncpchset{\M}{\N}$
(respectively, $\nschset{\M}{\N}$).
We also write 
$\ncpch{\M} := \ncpchset{\M}{\M}$
and
$\nsch{\M} := \nschset{\M}{\M} .$

\subsection{Tensor products of operator algebras}
\label{subsec:tensor}
Let $\A$ and $\B$ be \cstar-algebras.
We denote by $\A \atensor \B$ the algebraic tensor product of 
$\A$ and $\B .$
The structure of a 
$\ast$-algebra is naturally induced to
the algebraic tensor product $\A \atensor \B$
by
\begin{gather*}
	\left(
	\sum_i A_i \otimes B_i
	\right)
	\left(
	\sum_j C_j \otimes D_j
	\right)
	:=
	\sum_{i,j}
	A_i C_j \otimes B_i D_j ,
	\\
	\left(
	\sum_i A_i \otimes B_i
	\right)^\ast
	:=
	\sum_i A_i^\ast \otimes B_i^\ast
\end{gather*}
$(A_i , C_j \in \A; \,B_i , D_j \in \B ).$
A norm $\gnorm{\cdot}$ on $\A \atensor \B$ satisfying
$\gnorm{XY} \leq \gnorm{X} \gnorm{Y} , $
$\gnorm{X^\ast} = \gnorm{X} ,$
and
$\gnorm{X^\ast X} = \gnorm{X}^2 $
$(X ,Y \in \A \atensor \B)$
is called a \cstar-norm on $\A \atensor \B .$
The completion of $\A \atensor \B$ with respect to
the \cstar-norm $\gnorm{\cdot} ,$ 
written as $\A \gtensor \B ,$
is a \cstar-algebra and called a \cstar-tensor product of 
$\A$ and $\B .$
We denote the injective norm, the injective tensor product,
the projective norm, and the projective tensor product 
by
$\minnorm{\cdot} ,$
$\A \mintensor \B,$
$\maxnorm{\cdot} ,$
and 
$\A \maxtensor \B ,$
respectively.
We have
$\minnorm{X} \leq \gnorm{X} \leq \maxnorm{X}$
$(X\in \A \atensor \B)$
for any \cstar-norm $\gnorm{\cdot}$
on $\A \atensor \B .$
If either $\A$ or $\B$ is commutative,
then we have $\minnorm{\cdot} = \maxnorm{\cdot}$
and the \cstar-norm is unique in this case.

Let $\M$ and $\N$ be von Neumann algebras acting on
Hilbert spaces $\cH$ and $\cK ,$ respectively.
The von Neumann algebra 
on $\cH \otimes \cK$
generated by 
$\{ A \otimes B | A \in \M , \, B \in \N \}$
is called the normal tensor product of $\M $ and $\N ,$
denoted by $\M \ntensor \N .$

Let $\M$ and $\N$ be von Neumann algebras,
let $\B$ be a \cstar-algebra,
and let $\M \otimes_y \B$ be the algebraic or a \cstar-tensor product
of $\M$ and $\B .$
A linear map
$\Lambda \colon \M \otimes_y \B \to \N$
is called left-normal 
if the linear map
\[
	\Lambda_B
	\colon 
	\M \ni A \mapsto
	\Lambda (A \otimes B)
	\in \N
\]
is ultraweakly continuous for each $B \in \B .$

Let $\M_1 , \M_2 ,$ and $\N$ be von Neumann algebras
and let $\M_1 \otimes_y \M_2$ be the algebraic or a \cstar-tensor product
of $\M_1$ and $\M_2 .$
A linear map $\M_1 \otimes_y \M_2 \colon \to \N$
is called binormal if the linear maps
\begin{gather*}
	\Lambda_{A_2^\prime}^L
	\colon
	\M_1 \ni A_1
	\mapsto
	\Lambda (A_1 \otimes A_2^\prime)
	\in \N ,
	\\
	\Lambda_{A_1^\prime}^R
	\colon
	\M_2 \ni A_2
	\mapsto
	\Lambda (A_1^\prime \otimes A_2)
	\in \N 
\end{gather*}
are ultraweakly continuous for each 
$A_1^\prime \in \M_1$
and each
$A_2^\prime \in \M_2 .$

\begin{proposition}[\cite{takesakivol1}, Propositions IV.4.23 and IV.5.13]
\label{prop:cp}
Let $\A , \B , \C , $ and $\D$ be \cstar-algebras
and let $\Lambda \in \cpchset{\A}{\B}$ and 
$\Gamma \in \cpchset{\C}{\D}$ be CP channels.
Define the algebraic tensor product map
$\Lambda \atensor \Gamma \colon
\A \atensor \C \to \B \atensor \D
$
by
\[
	\Lambda \atensor \Gamma 
	\left(
	\sum_i A_i \otimes C_i
	\right)
	:=
	\sum_i
	\Lambda (A_i )
	\otimes 
	\Gamma (B_i)
\]
$(A_i \in \A , C_i \in \C) .$
Then $\Lambda \atensor \Gamma$ uniquely extends to a 
CP channel 
$\Lambda \mintensor \Gamma \in \cpchset{\A \mintensor \C}{\B \mintensor \D} .$
If we further assume that $\A, \B , \C ,$ and $\D$ are von Neumann algebras
and that $\Lambda$ and $\Gamma$ are normal,
then $\Lambda \atensor \Gamma $ uniquely extends to a normal channel
$\Lambda \ntensor \Gamma \in \ncpchset{\A \ntensor \C}{\B \ntensor \D} .$
\end{proposition}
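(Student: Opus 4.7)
The plan is to build both extensions from Stinespring dilations of $\Lambda$ and $\Gamma$ combined with a tensor product of the dilating representations.

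For the minimal tensor product assertion, I would fix faithful representations $\B \subseteq \calL(\cK_1)$ and $\D \subseteq \calL(\cK_2)$, so that $\Lambda$ and $\Gamma$ become CP maps into $\calL(\cK_1)$ and $\calL(\cK_2)$ respectively. Stinespring's theorem supplies representations $\pi_1 \colon \A \to \calL(\cH_1)$, $\pi_2 \colon \C \to \calL(\cH_2)$ and isometries $V_i \colon \cK_i \to \cH_i$ with $\Lambda(A) = V_1^\ast \pi_1(A) V_1$ and $\Gamma(C) = V_2^\ast \pi_2(C) V_2$. By the defining property of the minimal norm, the algebraic tensor $\pi_1 \atensor \pi_2$ extends uniquely to a $\ast$-representation $\pi_1 \mintensor \pi_2$ of $\A \mintensor \C$ on $\cH_1 \otimes \cH_2$. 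Setting
\[
\Theta(X) := (V_1 \otimes V_2)^\ast (\pi_1 \mintensor \pi_2)(X) (V_1 \otimes V_2), \qquad X \in \A \mintensor \C ,
\]
produces a unital CP map into $\calL(\cK_1 \otimes \cK_2)$ whose restriction to $\A \atensor \C$ coincides with $\Lambda \atensor \Gamma$. Since the minimal norm on $\B \atensor \D$ equals the operator norm inherited from $\calL(\cK_1 \otimes \cK_2)$, the image of $\A \atensor \C$ under $\Theta$ lies in $\B \mintensor \D$; norm continuity of $\Theta$ together with density of $\A \atensor \C$ in $\A \mintensor \C$ then places the full image in $\B \mintensor \D$ and yields uniqueness of the extension.

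For the normal tensor product assertion, I would replace ordinary Stinespring by normal Stinespring for the normal CP channels $\Lambda, \Gamma$, producing normal representations $\pi_1, \pi_2$ and isometries $V_i$ as above. The key structural input is that the tensor product $\pi_1 \ntensor \pi_2$, defined on elementary tensors by $A \otimes C \mapsto \pi_1(A) \otimes \pi_2(C)$, extends to a normal $\ast$-representation of $\A \ntensor \C$ on $\cH_1 \otimes \cH_2$. Granting this,
\[
\Theta(X) := (V_1 \otimes V_2)^\ast (\pi_1 \ntensor \pi_2)(X) (V_1 \otimes V_2), \qquad X \in \A \ntensor \C ,
\]
is a normal unital CP map into $\calL(\cK_1 \otimes \cK_2)$. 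Its restriction to $\A \atensor \C$ takes values in $\B \ntensor \D$, and by ultraweak density of $\A \atensor \C$ in $\A \ntensor \C$, the normality of $\Theta$, and the ultraweak closedness of $\B \ntensor \D$ in $\calL(\cK_1 \otimes \cK_2)$, the image lies in $\B \ntensor \D$ on all of $\A \ntensor \C$. Uniqueness again follows from the same density argument.

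The main obstacle is establishing the normal $\ast$-representation $\pi_1 \ntensor \pi_2$: one must show that tensoring two normal representations of von Neumann algebras yields a map that is both well-defined on the spatial $W^\ast$-tensor product and normal there. This is not formal and relies on the fact that the ultraweak topology on $\A \ntensor \C$ is the restriction of the ultraweak topology on $\calL(\cH_1 \otimes \cH_2)$ under the spatial embedding, together with standard techniques for extending normal $\ast$-homomorphisms in the style of Tomiyama and Takesaki.
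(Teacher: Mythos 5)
The paper does not prove this proposition; it is quoted verbatim from Takesaki (Propositions IV.4.23 and IV.5.13), so there is no in-paper argument to compare yours against. Your Stinespring-dilation route is the standard textbook proof and is essentially correct. For the minimal tensor product the argument is complete: compressing $\pi_1\mintensor\pi_2$ by $V_1\otimes V_2$ gives a unital CP extension, the inequality $\lVert(\pi_1\atensor\pi_2)(X)\rVert\le\minnorm{X}$ for arbitrary (not necessarily faithful) representations is what lets $\pi_1\atensor\pi_2$ pass to $\A\mintensor\C$, and the independence of $\minnorm{\cdot}$ from the choice of faithful representations of $\B$ and $\D$ is exactly what identifies the closure of the image with $\B\mintensor\D$. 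You use both facts correctly, though you might flag explicitly that the first holds without faithfulness of $\pi_1,\pi_2$ (it follows by enlarging each $\pi_i$ to a faithful representation, which can only increase the norm).

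The only substantive caveat concerns the normal case: the existence of the normal $\ast$-representation $\pi_1\ntensor\pi_2$ of $\A\ntensor\C$ is not an auxiliary technicality but is essentially the entire content of the von Neumann algebra half of the cited result; once you have it, the compression step is routine. You correctly identify this as the obstacle and name the right tool (the decomposition of a normal representation into an amplification, a reduction by a projection in the commutant, and a spatial isomorphism, for each of which the tensor extension is immediate), but you do not carry it out, so as written the proof of the second assertion reduces the proposition to a statement of comparable depth rather than proving it. Since that statement is itself a standard theorem (Takesaki IV.5.2/IV.5.13), this is an acceptable modularization, but to make the proof self-contained you would need to either execute that decomposition argument or cite it as an external input on the same footing as the proposition itself.
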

%Note that $\Lambda \atensor \Gamma$ in Proposition~\ref{prop:cp} is also CP
%since it is the restriction of the CP map $\Lambda \mintensor \Gamma$
%to $\A \atensor \C .$

A special case of Proposition~\ref{prop:cp} is normal states on von Neumann algebras:
if $\vph$ and $\psi$ are normal states on von Neumann algebras
$\M$ and $\N ,$ respectively,
then 
%Proposition~\ref{prop:cp} implies that
the linear functional $\vph \atensor \psi$ on $\M \atensor \N$
uniquely extends to a normal state $\vph \ntensor \psi $
on $\M \ntensor \N .$

\subsection{Coarse-graining relations for statistical experiments}
\label{subsec:cg}
A statistical experiment is a triple
$\E = \seE $
such that
$\M$ is a von Neumann algebra,
$\Theta$ is a set,
and 
$(\phth)_{\thin} \in \Ss (\M)^\Theta$
is a family of normal states on $\M$
indexed by $\Theta .$
$\M$ and $\Theta$ are called the outcome space and the parameter set 
of $\E ,$ respectively.
Operationally, a statistical experiment 
$\E = \seE$
corresponds to 
the situation in which the state on the outcome space 
$\M$ is known to 
be one of the states $(\phth)_{\thin} .$
The family of normal states
$(\phth)_{\thin}$ is called faithful if
for any positive $A \in \M ,$
$\phth (A) = 0$ for all $\thin$ implies
$A = 0.$
If so, we also say that $\E$ is faithful.
$\E$ is faithful if and only if
\[
	\bigvee_{\thin} \s (\phth) = \unit_\M .
\]
A statistical experiment $\E$ is called classical if 
the outcome space of $\E$ is commutative.

%The normal extension
For a statistical experiment
$\E = \seE$
we define the normal extension of $\E$
by 
$
\overline{\E}
:=
(\M^{\ast\ast} , \Theta , (\overline{\phth})_{\thin}) ,
$
where $\overline{\phth}$ is the normal extension of 
$\phth$
to 
$\M^{\ast\ast} .$

Let $\E = \seE$
and $\F = \seF$
be statistical experiments 
with a common parameter set $\Theta .$
We introduce the following coarse-graining
(or randomization) and isomorphism relations for statistical experiments:
%\begin{gather*}
%	\E \cocp \F 
%	\, (\text{$\E$ is a coarse-graining of $\F$})
%	\\
%	:\defarrow
%	\exists \Phi \in \cpchset{\M}{\N}
%	\text{ s.t. }
%	[
%	\phth = \psth \circ \Phi
%	(\forall \thin)
%	] ;
%	\\
%	\E \concp \F
%	\, (\text{$\E$ is a normal coarse-graining of $\F$}) \\
%	:\defarrow
%	\exists \Phi \in \ncpchset{\M}{\N}
%	\text{ s.t. }
%	[
%	\phth = \psth \circ \Phi
%	(\forall \thin)
%	] ;
%	\\
%	\E \eqcp \F
%	\, (\text{$\E$ is CP equivalent to $\F$}) \\
%	 :\defarrow 
%	\E \cocp \F
%	\text{ and }
%	\F \cocp \E ;
%	\\
%	\E \eqncp \F
%	\, (\text{$\E$ is normally CP equivalent to $\F$}) \\
%	 :\defarrow 
%	\E \concp \F
%	\text{ and }
%	\F \concp \E  .
%\end{gather*}
\begin{itemize}
\item[$\bullet$]
$\E \cocp \F $
($\E$ is a coarse-graining of $\F$)
%	\\
$	:\defarrow
	\exists \Phi \in \cpchset{\M}{\N}
$
s.t.\
$
	[
	\phth = \psth \circ \Phi
	(\forall \thin)
	] ;
$
\item[$\bullet$]
$\E \concp \F $
($\E$ is a normal coarse-graining of $\F$)
%	\\
$	:\defarrow
	\exists \Phi \in \ncpchset{\M}{\N}
$
s.t.\
$
	[
	\phth = \psth \circ \Phi
	(\forall \thin)
	] ;
$
\item[$\bullet$]
$\E \eqcp \F $
($\E$ is CP equivalent to $\F$)
$:\defarrow$
$\E \cocp \F$ and $\F \cocp \E ;$
\item[$\bullet$]
$\E \eqncp \F $
($\E$ is normally CP equivalent to $\F$)
$:\defarrow$
$\E \concp \F$ and $\F \concp \E  ;$
\item[$\bullet$]
$\E \cong \F$ 
($\E$ and $\F$ are normally isomorphic)
$:\defarrow$
there exists a normal isomorphism
$\pi \colon \M \to \N$
such that
$\phth = \psth \circ \pi $
for all $\thin .$
\end{itemize}
The relations $\cocp$ and $\concp$ are binary preorder relations for statistical experiments,
and $\eqcp ,$ $\eqncp ,$ and $\cong$ are binary equivalence relations.

The following two lemmas, which will be used in the proof of
Theorem~\ref{theo:main},
are due to \cite{kuramochi2017incomp}
(Corollaries~3 and 4).

\begin{lemma}
\label{lemm:nex}
Let $\E = \seE$ be a statistical experiment
and let $\overline{\E} = (\M^{\ast\ast }, \Theta , (\overline{\phth})_{\thin})$ 
be the normal extension of $\E .$
Then $\E \eqncp \overline{\E} .$
\end{lemma}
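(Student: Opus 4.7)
The plan is to exhibit explicit normal CP channels in both directions, exploiting the special structure of $\M^{\ast\ast}$ when $\M$ is already a von Neumann algebra. The essential ingredient is the standard normal--singular decomposition (Takesaki, Vol.~I, III.2): there exists a central projection $z \in \zent(\M^{\ast\ast})$ such that the compression $\iota_z \colon \M \to z \M^{\ast\ast}$, $A \mapsto z \pi_\M(A)$, is a normal $\ast$-isomorphism onto the corner $z\M^{\ast\ast}$, and every normal functional on $\M^{\ast\ast}$ vanishes on the complementary corner $(\unit - z)\M^{\ast\ast}$. In particular, $\overline{\phth}((\unit - z)X) = 0$ and hence $\overline{\phth}(zX) = \overline{\phth}(X)$ for all $X \in \M^{\ast\ast}$ and $\thin$.

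For $\overline{\E} \concp \E$ I would set $\Psi \colon \M^{\ast\ast} \to \M$, $\Psi(X) := \iota_z^{-1}(zX)$. As the composition of the normal $\ast$-homomorphism $X \mapsto zX$ with the normal $\ast$-isomorphism $\iota_z^{-1}$, $\Psi$ is a normal unital $\ast$-homomorphism, hence a normal CP channel. For $A \in \M$ one has $\Psi(\pi_\M(A)) = A$ by construction, so $\phth \circ \Psi$ and $\overline{\phth}$ are two normal functionals on $\M^{\ast\ast}$ agreeing on $\pi_\M(\M)$; since $\pi_\M(\M)$ is ultraweakly dense in $\M^{\ast\ast} = \pi_\M(\M)''$, this forces $\overline{\phth} = \phth \circ \Psi$ on all of $\M^{\ast\ast}$.

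For $\E \concp \overline{\E}$ I would fix any normal state $\omega \in \Ss(\M)$ and set $\Phi \colon \M \to \M^{\ast\ast}$, $\Phi(A) := z \pi_\M(A) + \omega(A)(\unit - z)$. Unitality holds since $\Phi(\unit_\M) = z + (\unit - z) = \unit_{\M^{\ast\ast}}$, complete positivity follows from $\Phi$ being a sum of a compressed $\ast$-homomorphism and a rank-one CP map, and normality from the normality of both summands. The identity $\overline{\phth}((\unit - z)Y) = 0$ then yields $\overline{\phth}(\Phi(A)) = \overline{\phth}(z\pi_\M(A)) = \overline{\phth}(\pi_\M(A)) = \phth(A)$. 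The main obstacle is really just invoking the central projection $z$ and its defining properties cleanly; once these are in hand, the candidate channels $\Phi$ and $\Psi$ are natural and the remaining verifications reduce to routine algebraic manipulations.
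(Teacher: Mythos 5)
Your proof is correct and is essentially the paper's own argument: your central projection $z$ is exactly the projection $\widetilde{P}$ the paper extracts from the kernel of the normal representation $\widetilde{\pi}\colon\M^{\ast\ast}\to\M$ extending $\id_\M$, your $\Psi$ coincides with $\widetilde{\pi}$, and your $\Phi(A)=z\pi_\M(A)+\omega(A)(\unit-z)$ is literally the paper's $\rho(A)+\phi_0(A)\widetilde{P}^\perp$; the only difference is that you invoke the standard normal--singular decomposition of $\M^{\ast\ast}$ directly instead of deriving the projection from the universal property. One phrasing slip worth fixing: it is not true that \emph{every} ultraweakly continuous functional on $\M^{\ast\ast}$ vanishes on $(\unit-z)\M^{\ast\ast}$ (singular states of $\M$ also extend normally to $\M^{\ast\ast}$ and live on that corner); the correct and sufficient statement is that the extension $\overline{\phi}$ of a functional $\phi\in\M^\ast$ vanishes there precisely when $\phi$ is normal on $\M$, which applies to your $\overline{\phth}$.
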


\begin{lemma}
\label{lemm:coeq}
Let $\E = \seE$ and $\F = \seF$ be statistical experiments
with a common parameter set $\Theta .$
Then $\E \cocp \F$ 
(respectively, $\E \eqcp \F$)
if and only if
$\E \concp \F$ 
(respectively, $\E \eqncp \F$).
\end{lemma}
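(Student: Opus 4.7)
The implications $\E \concp \F \Rightarrow \E \cocp \F$ and $\E \eqncp \F \Rightarrow \E \eqcp \F$ are immediate, since every normal CP channel is in particular a CP channel. The equivalence statement reduces to the preorder statement applied in both directions, so the plan is to focus on proving: if $\Phi \in \cpchset{\M}{\N}$ satisfies $\phth = \psth \circ \Phi$ for all $\thin$, then there exists $\widetilde{\Phi} \in \ncpchset{\M}{\N}$ with $\phth = \psth \circ \widetilde{\Phi}$ for all $\thin$.

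The core idea is to pass through the enveloping von Neumann algebras and use the normal equivalences supplied by Lemma~\ref{lemm:nex}. First I would form the Banach-space bidual $\overline{\Phi} := \Phi^{\ast\ast} \colon \M^{\ast\ast} \to \N^{\ast\ast}$, which is automatically ultraweakly continuous (being the transpose of a transpose) and which inherits complete positivity and unitality from $\Phi$, so $\overline{\Phi} \in \ncpchset{\M^{\ast\ast}}{\N^{\ast\ast}}$. Next I would verify that the intertwining relation lifts to the biduals: both $\overline{\phth}$ and $\overline{\psth} \circ \overline{\Phi}$ are ultraweakly continuous functionals on $\M^{\ast\ast}$, and for $A \in \M \subset \M^{\ast\ast}$ they both equal $\phth(A)$; since $\M$ is ultraweakly dense in $\M^{\ast\ast}$ by Goldstine's theorem, the identity $\overline{\phth} = \overline{\psth} \circ \overline{\Phi}$ holds on all of $\M^{\ast\ast}$. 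Finally, Lemma~\ref{lemm:nex} applied to $\E$ and $\F$ yields $\Theta \in \ncpchset{\M}{\M^{\ast\ast}}$ with $\phth = \overline{\phth} \circ \Theta$ and $\Psi \in \ncpchset{\N^{\ast\ast}}{\N}$ with $\overline{\psth} = \psth \circ \Psi$; then $\widetilde{\Phi} := \Psi \circ \overline{\Phi} \circ \Theta$ is a composition of normal CP channels, and a short chain of substitutions gives $\psth \circ \widetilde{\Phi} = \overline{\psth} \circ \overline{\Phi} \circ \Theta = \overline{\phth} \circ \Theta = \phth$, as required.

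The only genuinely delicate step is the lifting of the intertwining relation from $\M$ to $\M^{\ast\ast}$; this is where normality of the states $\phth$ and $\psth$ is essential, since it guarantees the ultraweak continuity of both functionals and makes the density argument go through, whereas for arbitrary (non-normal) states the extensions $\overline{\phth}$ and $\overline{\psth}\circ\overline{\Phi}$ could differ on the ``singular part'' of $\M^{\ast\ast}$. Everything else — the bidual construction, the composition, and the reduction of the equivalence case — is routine once Lemma~\ref{lemm:nex} is in hand.
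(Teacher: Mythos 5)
Your argument is correct and follows essentially the same route as the paper: pass to the enveloping von Neumann algebras, check that the bidual $\Phi^{\ast\ast}$ is a normal CP channel intertwining the extended states via ultraweak density, and transfer back using Lemma~\ref{lemm:nex}. The one step you assert rather than prove --- that $\Phi^{\ast\ast}$ inherits complete positivity --- is precisely where the paper's proof does its work (a Kaplansky-density approximation exhibiting each matrix $\bigl(\Phi^{\ast\ast}(A_i^{\prime\prime\ast}A_j^{\prime\prime})\bigr)_{i,j}$ as an ultraweak limit of positive matrices), so that argument should be supplied, but it does not change the approach.
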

The proof in \cite{kuramochi2017incomp} is based on
the results on the compatibility relations of CP channels
and requires a number of irrelevant discussions.
For readers' convenience, we give other
direct proofs of these lemmas in Appendix~\ref{sec:appendix}.

\subsection{Minimal sufficient subalgebra and statistical experiment}
Let $\A$ be a \cstar-algebra and let 
$\B$ be a \cstar-subalgebra of $\A .$
A linear map
$\condi \colon \A \to \B$
is called a conditional expectation onto $\B$ if
$\condi (B) = B$
for all $B \in \B $
and $\norm{\condi (A)} \leq \norm{A}$
for all $A \in \A .$
Tomiyama's theorem (e.g.\ \cite{brown2008c}, Theorem~1.5.10)
states that 
a conditional expectation $\condi$ onto $\B$ satisfies
$
\condi (B_1 A B_2)
=
B_1
\condi (A) 
B_2 
$
$(A \in \A ; B_1 , B_2 \in \B).$
From this we also have
$ \condi \in \cpchset{\A}{\B} .$

Let $\E = \seE$ be a statistical experiment and let $\N \subseteq \M$
be a von Neumann subalgebra of $\M .$
$\N$ is said to be sufficient 
(in the sense of Umegaki~\cite{umegaki1959,umegaki1962,luczak2014})
with respect to the family
$(\phth)_{\thin}$
if there exists a normal conditional expectation
$\condi$
from $\M$ onto $\N$
such that
$\phth \circ \condi = \phth$
for all $\thin .$
$\N$ is said to be a minimal sufficient subalgebra 
with respect to $(\phth)_{\thin}$
if 
$\N$ is sufficient and contained in any sufficient subalgebra.
A minimal sufficient subalgebra is, if exists, unique
for a given family of normal states.

A statistical experiment $\E = \seE$ is called 
\emph{minimal sufficient}
if $\E$ satisfies either of the following equivalent conditions 
(\cite{kuramochi2017minimal}, Theorem~2):
\begin{enumerate}[(i)]
\item
for $\Phi \in \ncpch{\M} ,$
$\phth \circ \Phi = \phth$
for all $\thin$ implies
$\Phi = \id_\M ;$
\item
for $\Phi \in \nsch{\M} ,$
$\phth \circ \Phi = \phth$
for all $\thin$ implies
$\Phi = \id_\M ;$
\item
$(\phth)_{\thin}$ is faithful and
$\M$ is a minimal sufficient subalgebra with respect to
$(\phth)_{\thin} .$
\end{enumerate}
From condition~(iii), 
a minimal sufficient statistical experiment is faithful.

Let $\E = \seE$ be an arbitrary statistical experiment.
It is known
that there exists a minimal sufficient statistical experiment 
$\E_0$ satisfying $\E \eqncp \E_0 $
(\cite{kuramochi2017minimal}, Theorem~1).
Such a minimal sufficient statistical experiment $\E_0$ is unique 
up to normal isomorphism 
and, in this sense, we may say that
$\E_0$ is \emph{the} minimal sufficient statistical experiment
normally CP equivalent to $\E .$
If $\E$ is faithful, $\E_0$ can be constructed as follows~\cite{kuramochi2017minimal,luczak2014}.
Define 
\begin{gather*}
	\mathbb{F}
	:=
	\set{
	\Phi \in \ncpch{\M} 
	|
	\phth \circ \Phi
	=\phth 
	\,
	(\forall \thin)
	},
	\\
	\M_0
	:=
	\set{
	A \in \M |
	\Phi (A) 
	= A
	\,
	(\forall \Phi \in \mathbb{F})
	}.
\end{gather*}
Then from the mean ergodic theorem for von Neumann algebras~\cite{ISI:A1979HD47500016},
there exists a normal conditional expectation $\condi$
from $\M$ onto $\M_0$ such that
\begin{gather}
	\phth \circ \condi = \phth
	\quad
	(\forall \thin) ,
	\label{eq:unchange}
	\\
	\condi \circ \Phi = \Phi \circ \condi = \condi
	\quad
	(\forall \Phi \in \mathbb{F}) .
	\label{eq:ergodic}
\end{gather}
If we write as $\phthz$ the restriction of $\phth$ to $\M_0 ,$
then we can show that $\E_0 := \seEz $
is the desired minimal sufficient statistical experiment.
In this case, $\M_0$ is the minimal sufficient subalgebra with respect to
$(\phth)_{\thin} .$
We note that the condition~\eqref{eq:unchange}
for a conditional expectation $\condi$ onto $\M_0$
uniquely determines $\condi ;$
indeed if $\condi^\prime$ is a normal conditional expectation
from $\M$ onto $\M_0$ satisfying
$\phth \circ \condi^\prime = \phth$
for all $\thin ,$
then condition~\eqref{eq:ergodic}
implies 
$\condi^\prime = \condi \circ \condi^\prime = \condi .$

\section{Accessible part of statistical experiment}
\label{sec:accessible}
In this section, we prove the main result (Theorem~\ref{theo:main})
and apply it to the no-broadcasting theorem and 
the case of density operators on a
Hilbert space.

\subsection{Accessible and classical parts of a statistical experiment}
\label{subsec:main}

We first define the accessibility relations for statistical experiments as follows.

\begin{definition}[accessibility relations for statistical experiments]
\label{defi:acrelation}
Let $\E = (\M , \Theta , (\phth)_{\thin})$ 
and
$\F = (\N , \Theta , (\psth)_{\thin})$
be statistical experiments with a common parameter set
$\Theta ,$
let $X$ be either $\CP$ or $\Sch ,$ and
let $\M \otimes_y \N$ be the algebraic or a \cstar-tensor product of 
$\M$ and $\N .$
\begin{enumerate}
\item
A channel $\Lambda \in \mathbf{Ch}^X( \M \otimes_y \N \to \M   )$
is said to 
\emph{extract $\F$ without disturbing $\E$}
if for each $\thin ,$ each $A \in \M ,$ and each $B \in \N , $
it holds that
\begin{gather*}
	\phth (A)
	=
	\phth \circ \Lambda (A \otimes \unit_\N) ,
	\\
	\psth (B)
	=
	\phth \circ \Lambda (\unit_\M \otimes B ) .
\end{gather*}
\item
$\F$ is said to be \emph{accessible without disturbing $\E$}
in the sense of $X$ channel and the tensor product $\otimes_y ,$
written as $\F \clp^X_{y} \E ,$
if there exists a channel 
$\Lambda \in \mathbf{Ch}^X( \M \otimes_y \N \to \M   )$
that extracts $\F$ without disturbing $\E .$
\item
We write $\F \clp^X_{y, \bin} \E$
(respectively, $\F \clp^X_{y , \lnor}\E$)
if there exists a binormal (respectively, left-normal)
channel
$\Lambda \in \mathbf{Ch}^X( \M \otimes_y \N \to \M   )$
that extracts $\F$ without disturbing $\E .$
\end{enumerate}
\end{definition}

Let us consider a special case 
in which $\M $ and $\N  $ are the full operator algebras
$\LH$ and $\LK$
on finite-dimensional Hilbert spaces $\cH$ and $\cK ,$ respectively.
As usual, we identify the normal states
$\phth$ and $\psth$ 
with density operators $\rth$ on $\cH$
and $\sigma_\theta$ on $\cK ,$ respectively, such that
$\phth (A) = \tr (\rth A)$
and 
$\psth (B) = \tr (\sigma_\theta B)$
$(A \in \LH , B \in \LK) .$
Then a CP channel 
$\Lambda \in \cpchset{\calL (\cH \otimes \cK)}{\LH}$
extracts $\F$ without disturbing $\E$ if and only if
\begin{equation}
	\rth = \tr_\cK \circ \Lambda^\ast (\rth) ,
	\quad
	\sigma_\theta = \tr_\cH \circ \Lambda^\ast (\rth) 
	\label{eq:accessible}
\end{equation}
for all $\thin ,$
where $\tr_\cH$ and $\tr_\cK$ denote the partial traces over 
$\cH$ and $\cK ,$ respectively.
If $\F$ is identical to $\E ,$
the condition \eqref{eq:accessible}
reduces to the broadcastability condition for 
the states $(\rth)_{\thin}$
and the channel $\Lambda .$
%(See Section~\ref{subsec:nb} for detail).

Next, we introduce the classical part of a statistical experiment.

\begin{definition}[classical part of statistical experiment]
\label{defi:clpart}
Let $\E = \seE$ be a statistical experiment
and let 
$\E_0 = \seEz$ 
be the minimal sufficient statistical experiment
satisfying $\E \eqncp \E_0 .$
We define \emph{the classical part} of $\E$
as the statistical experiment
$
(\zent (\M_0) , \Theta , (\phth^{\cl})_{\thin}) ,
$
where $\phth^{\cl}$ is the restriction of $\phth^{(0)}$
to the center $\zent (\M_0)  .$
The classical part of $\E$ is denoted by
$\E_\cl .$
\end{definition}
We note that the classical part $\E_\cl$ of a statistical experiment
$\E$ is well-defined up to normal isomorphism
due to the uniqueness of the minimal sufficient
$\E_0 .$

The following theorem,
the main result of this paper,
states that 
we can access at most the classical part of a statistical experiment $\E$
without disturbing $\E ;$
in other words, the classical part $\E_\cl$ is the accessible part of $\E .$
%%%%%%%%%%%%%The main result%%%%%%%%%%%%
\begin{theorem}
\label{theo:main}
Let 
$\E = \seE,$
$\F = \seF ,$
$X ,$
and
$\M \otimes_y \N $
be the same as in Definition~\ref{defi:acrelation},
and let $z$ be either $\bin$ or $\lnor $ or blank.
Then the condition $\F \clp^X_{y, z} \E$
does not depend on the choices of $X ,y,$ and $z ,$
and is equivalent to $\F \concp \E_\cl .$
%
%\begin{enumerate}[(i)]
%\item \label{enum:1_cpgbin}
%$\F \clp^{\CP}_{\gamma , \bin} \E ;$
%
%
%\item \label{enum:1_salg}
%$\F \clp^{\Sch}_{\alg} \E ;$
%
%\item \label{enum:1_salgl}
%$\F \clp^{\Sch}_{\alg, \lnor} \E ;$
%
%
%
%\item
%\label{enum:1_co}
%$\F \concp \E_\cl .$
%
%\end{enumerate}
\end{theorem}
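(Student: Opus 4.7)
The plan is to close the chain
$\F \concp \E_\cl \Longrightarrow \F \clp^\CP_{y, \bin} \E \Longrightarrow \F \clp^\Sch_\alg \E \Longrightarrow \F \concp \E_\cl,$
which subsumes every intermediate choice of $X$, $y$, $z$: weakening $X$ from $\CP$ to $\Sch$, shrinking the tensor product, and dropping the normality condition on $z$ all pass through a binormal CP channel on the largest tensor product by restriction. For the first arrow, given a normal CP channel $\Psi \in \ncpchset{\N}{\zent(\M_0)}$ with $\psth = \phth^{\cl} \circ \Psi,$ set
\[
	\Lambda := \mu \circ (\condi \ntensor \Psi) \colon \M \ntensor \N \to \M_0 \subseteq \M ,
\]
where $\condi \colon \M \to \M_0$ is the canonical normal conditional expectation and $\mu \colon \M_0 \ntensor \zent(\M_0) \to \M_0$ is the multiplication $A \otimes z \mapsto Az.$ Since $\zent(\M_0)$ is central in $\M_0,$ $\mu$ is a normal unital $\ast$-homomorphism, so by Proposition~\ref{prop:cp} $\Lambda$ is a normal CP binormal channel; the two no-disturbance identities follow from $\phth \circ \condi = \phth,$ $\Psi(\unit_\N) = \unit,$ and $\phth^{\cl} \circ \Psi = \psth.$ The second arrow is immediate by restriction.

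For the third arrow, suppose a Schwarz channel $\Lambda \colon \M \atensor \N \to \M$ extracts $\F$ without disturbing $\E.$ Reduce to $\M = \M_0$ by replacing $\Lambda$ with $\Lambda' := \condi \circ \Lambda|_{\M_0 \atensor \N} \colon \M_0 \atensor \N \to \M_0,$ which remains Schwarz and still satisfies the two no-disturbance identities with respect to $\E_0$; since $\E_\cl = (\E_0)_\cl,$ relabel and assume $\M = \M_0.$ Set $\Lambda_1(A) := \Lambda(A \otimes \unit_\N)$ and $\Psi(B) := \Lambda(\unit_\M \otimes B);$ both are Schwarz with $\phth \circ \Lambda_1 = \phth$ and $\phth \circ \Psi = \psth$ for all $\thin.$

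The heart of the argument is to prove $\Lambda_1 = \id_\M.$ When $z$ is $\bin$ or $\lnor,$ $\Lambda_1$ is normal Schwarz, and condition~(ii) of minimal sufficiency immediately yields $\Lambda_1 = \id_\M.$ When $z$ is blank, $\Lambda_1$ is a priori only Schwarz; the plan is to pass to the normal bidual extension $\Lambda_1^{\ast\ast} \colon \M^{\ast\ast} \to \M^{\ast\ast},$ which is normal Schwarz and preserves $\overline{\phth},$ and then combine Lemma~\ref{lemm:nex} with the uniqueness of the minimal sufficient experiment to deduce that $\Lambda_1^{\ast\ast}$ fixes the image of $\M$ in $\M^{\ast\ast},$ hence $\Lambda_1 = \id_\M.$ Once $\Lambda_1 = \id_\M,$ the Schwarz defect of $A \otimes \unit_\N$ vanishes trivially, so $A \otimes \unit_\N \in \M_\Lambda$ for every $A \in \M.$ Applying Lemma~\ref{lemm:mdomain} to the commuting elements $(A \otimes \unit_\N)(\unit_\M \otimes B) = (\unit_\M \otimes B)(A \otimes \unit_\N)$ yields $\Lambda(A \otimes B) = A \Psi(B) = \Psi(B) A,$ so $\Psi(B)$ commutes with every $A \in \M_0,$ i.e., $\Psi(B) \in \zent(\M_0).$

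Therefore $\Psi \colon \N \to \zent(\M_0)$ is a Schwarz channel into a commutative algebra, hence automatically CP by the remark in Section~\ref{subsec:ch}, and satisfies $\phth^{\cl} \circ \Psi = \psth.$ This gives $\F \cocp \E_\cl,$ and Lemma~\ref{lemm:coeq} upgrades it to $\F \concp \E_\cl,$ closing the chain. The hardest step is the normality upgrade of $\Lambda_1$ when $z$ is blank: pushing a non-normal Schwarz map to the bidual while still exploiting minimal sufficiency requires a delicate bookkeeping of $\M = \M_0$ inside $\M^{\ast\ast}$ and of the resulting minimal sufficient subalgebra with respect to $(\overline{\phth}).$
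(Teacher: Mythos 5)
Your global architecture is sound and largely parallels the paper (a cyclic chain of implications, the multiplicative-domain argument giving $\Psi(\N)\subseteq\zent(\M_0)$, automatic complete positivity into the commutative center, and the upgrade $\F\cocp\E_\cl\Rightarrow\F\concp\E_\cl$ via Lemma~\ref{lemm:coeq}), but two of your steps have real problems. First, in the arrow $\F\concp\E_\cl\Rightarrow\F\clp^{\CP}_{y,\bin}\E$ you assert that the multiplication $\mu\colon\M_0\ntensor\zent(\M_0)\to\M_0$, $A\otimes Z\mapsto AZ$, is a \emph{normal} $\ast$-homomorphism on the von Neumann tensor product. This is false in general: for $\M_0=L^\infty[0,1]$ it is restriction to the diagonal, which does not extend normally to $L^\infty([0,1]^2)$. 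The construction survives if you work on $\M_0\mintensor\zent(\M_0)$ (max $=$ min since one factor is commutative), which is exactly the paper's Lemma~\ref{lemm:abelian}; and to cover an arbitrary \cstar-norm $\gamma$ you cannot argue ``by restriction'' from a largest product but must pull back along the canonical representation $\M\gtensor\N\to\M\mintensor\N$, as in the paper's step (iv)$\Rightarrow$(v). Second, your reduction ``assume $\M=\M_0$'' via $\condi\circ\Lambda|_{\M_0\atensor\N}$ presupposes a normal conditional expectation $\condi\colon\M\to\M_0$ with $\phth\circ\condi=\phth$, i.e.\ faithfulness of $\E$, which is not assumed; for non-faithful $\E$ the minimal sufficient $\E_0$ is not a subalgebra of $\M$, and you need a transfer argument as in Lemma~\ref{lemm:coarse} (which in turn needs Lemma~\ref{lemm:schwarz} to keep the Schwarz property under pre-composition with CP channels on an algebraic tensor product).

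The more serious gap is the step you yourself flag as the hardest: for blank $z$ you only state a \emph{plan} that $\Lambda_1^{\ast\ast}$ ``fixes the image of $\M$'' by ``combining Lemma~\ref{lemm:nex} with uniqueness of the minimal sufficient experiment.'' As written this is not a deduction: $\overline{\E_0}=(\M_0^{\ast\ast},\Theta,(\overline{\phthz}))$ is neither faithful nor minimal sufficient, and minimal sufficiency of $\E_0$ constrains only normal channels \emph{on $\M_0$}, so nothing quoted applies directly to $\Lambda_1^{\ast\ast}$ on $\M_0^{\ast\ast}$. The claim can in fact be salvaged, but it needs an argument you have not given: (a) verify that $\Lambda_1^{\ast\ast}$ is Schwarz (a Kaplansky-density limit argument, parallel to the paper's Lemma~\ref{lemm:lnor} and Appendix); (b) let $\widetilde{P}\in\zent(\M_0^{\ast\ast})$ be the central projection with $\widetilde{P}\M_0^{\ast\ast}\cong\M_0$, note $\s(\overline{\phthz})\le\widetilde{P}$ and, by faithfulness of the minimal sufficient $\E_0$, $\bigvee_\theta\s(\overline{\phthz})=\widetilde{P}$; conclude $\widetilde{P}\Lambda_1^{\ast\ast}(\widetilde{P})\widetilde{P}=\widetilde{P}$, so that positivity kills the contribution of $\widetilde{P}^\perp\pi_{\M_0}(A)$ under compression by $\widetilde{P}$; (c) the compressed map is then a normal Schwarz channel on $\widetilde{P}\M_0^{\ast\ast}\cong\M_0$ preserving the states, so minimal sufficiency forces it to be the identity, and only then does $\Lambda_1=\id_{\M_0}$ follow. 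This is a genuinely different route from the paper, which never proves that the (possibly non-normal) marginal is the identity: instead it extends the \emph{whole} channel $\Lambda$ to a left-normal Schwarz channel on $\M^{\ast\ast}\atensor\N$ (Lemma~\ref{lemm:lnor}) and transfers back down with Lemmas~\ref{lemm:nex} and \ref{lemm:coarse}, so that the left marginal is normal from the outset. Your marginal-bidual route is attractive and, once steps (a)--(c) are supplied, would work, but in its current form the central implication is asserted rather than proved.
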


To prove Theorem~\ref{theo:main}, 
we first show the following four lemmas.

\begin{lemma}
\label{lemm:lnor}
Let $\A , \B ,$ and $\C$ be \cstar-algebras 
and let $\Lambda \in \schset{\A \atensor \B}{\C}$
be a Schwarz channel.
Then there exists a left-normal Schwarz channel 
$\widetilde{\Lambda} \in \schset{\A^{\ast \ast} \atensor \B}{\C^{\ast \ast}}$ 
that is an extension of $\Lambda .$
\end{lemma}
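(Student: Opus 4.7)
The plan is to construct $\widetilde{\Lambda}$ slice by slice in the $\B$-variable via the Banach double-dual construction, then assemble the pieces into a single map on the algebraic tensor product, and finally verify the Schwarz condition by Kaplansky approximation from below.

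First I would show that for each fixed $B \in \B$ the partial slice
\[
	\Lambda_B \colon \A \ni A \mapsto \Lambda (A \otimes B) \in \C
\]
is norm-bounded. This follows from the Schwarz inequality together with unital positivity of $\Lambda$: for $A \in \A$,
\[
	\norm{\Lambda_B (A) }^2
	=
	\norm{\Lambda (A\otimes B)^\ast \Lambda (A \otimes B)}
	\leq
	\norm{\Lambda (A^\ast A \otimes B^\ast B)}
	\leq
	\norm{A}^2 \norm{\Lambda (\unit_\A \otimes B^\ast B)} ,
\]
using $A^\ast A \leq \norm{A}^2 \unit_\A$ and positivity of $\Lambda .$ Once $\Lambda_B$ is bounded, its Banach double-adjoint $\widetilde{\Lambda}_B := (\Lambda_B)^{\ast\ast} \colon \A^{\ast\ast} \to \C^{\ast\ast}$ is the unique weak-$\ast$-to-weak-$\ast$ continuous extension of $\Lambda_B$, which coincides with ultraweak continuity in the enveloping von Neumann algebras. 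Since $B \mapsto \Lambda_B$ is linear, so is $B \mapsto \widetilde{\Lambda}_B$ by uniqueness of the double-adjoint, and therefore the prescription
\[
	\widetilde{\Lambda}\left( \sum_k A_k \otimes B_k  \right)
	:=
	\sum_k \widetilde{\Lambda}_{B_k} (A_k)
\]
gives a well-defined linear map $\widetilde{\Lambda} \colon \A^{\ast\ast} \atensor \B \to \C^{\ast\ast}$ extending $\Lambda .$ Unitality is immediate: $\widetilde{\Lambda} (\unit_{\A^{\ast\ast}} \otimes \unit_\B ) = \widetilde{\Lambda}_{\unit_\B} (\unit_\A ) = \Lambda (\unit_\A \otimes \unit_\B) = \unit_\C .$ Left-normality is built in, since $\widetilde{\Lambda} (\cdot \otimes B) = \widetilde{\Lambda}_B$ is ultraweakly continuous for every $B \in \B .$

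The main obstacle is verifying that $\widetilde{\Lambda}$ is Schwarz, i.e.\ that the matrix inequality
\[
	\begin{pmatrix}
	\widetilde{\Lambda} (X^\ast X) & \widetilde{\Lambda} (X^\ast) \\
	\widetilde{\Lambda} (X) & \unit_{\C^{\ast\ast}}
	\end{pmatrix}
	\geq 0
	\qquad
	\text{in } M_2 (\C^{\ast\ast})
\]
holds for every $X = \sum_{k=1}^n A_k \otimes B_k \in \A^{\ast\ast} \atensor \B .$ For this I would apply Kaplansky's density theorem to choose, for each $k ,$ a bounded net $(A_k^{(\alpha)})_\alpha$ in $\A$ converging ultrastrongly$^\ast$ to $A_k .$ Setting $X^{(\alpha)} := \sum_k A_k^{(\alpha)} \otimes B_k \in \A \atensor \B ,$ the Schwarz property of $\Lambda$ yields the above matrix inequality with $X^{(\alpha)}$ in place of $X .$ The key observation is that each entry of the matrix converges ultraweakly to the corresponding entry for $X$: the $(1,1)$-entry $\Lambda (X^{(\alpha) \ast} X^{(\alpha)} ) = \sum_{i,j} \widetilde{\Lambda}_{B_i^\ast B_j} (A_i^{(\alpha)\ast} A_j^{(\alpha)} )$ converges because ultrastrong$^\ast$ convergence of bounded nets is preserved by the involution and by multiplication on bounded sets, and $\widetilde{\Lambda}_{B_i^\ast B_j}$ is ultraweakly continuous; the off-diagonal entries converge by ultraweak continuity of each $\widetilde{\Lambda}_{B_k}$ and of the involution on $\C^{\ast\ast} .$ Since ultraweak limits of positive elements are positive in $M_2 (\C^{\ast\ast}) ,$ the limit matrix is positive, which is precisely the Schwarz inequality and self-adjointness for $\widetilde{\Lambda} (X) .$

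The only subtle point worth flagging is the joint ultrastrong$^\ast$ continuity of the products $A_i^{(\alpha)\ast} A_j^{(\alpha)}$ on bounded sets, which is what allows me to avoid the lack of joint ultraweak continuity of multiplication; everything else is a direct computation once this is in place.
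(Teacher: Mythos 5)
Your proposal is correct and follows essentially the same route as the paper: bound each slice $\Lambda_B$ by the Schwarz inequality, take the Banach double adjoint to get the ultraweakly continuous extensions $\Lambda_B^{\ast\ast}$, assemble them into a left-normal map on $\A^{\ast\ast}\atensor\B$ by bilinearity, and verify the Schwarz matrix inequality by Kaplansky approximation with bounded nets, using that the $2\times 2$ matrix of images is an ultraweak limit of positive matrices. The one point you flag --- joint ultrastrong$^\ast$ continuity of multiplication on bounded sets --- is exactly the ingredient the paper also relies on, so there is nothing missing.
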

\begin{proof}
For each $A \in \A$ and each $B \in \B ,$ we have
\begin{align*}
	\norm{\Lambda (A \otimes B)}
	&=
	\norm{
	\Lambda ( A \otimes B )^\ast
	\Lambda ( A \otimes B )
	}^{1/2}
	\\
	& \leq 
	\norm{ 
	\Lambda ( ( A \otimes B )^\ast  ( A \otimes B )  )
	}^{1/2}
	\\
	&=
	\norm{\Lambda ( A^\ast A \otimes B^\ast B )}^{1/2}
	\\
	& \leq
	\norm{\Lambda ( \norm{A}^2 \unit_\A \otimes B^\ast B )}^{1/2}
	\\
	& \leq
	\norm{\Lambda ( \norm{A}^2 \unit_\A \otimes \norm{B}^2 \unit_\B )}^{1/2}
	\\
	&=
	\norm{A} \norm{B} .
\end{align*}
Therefore, for each $B \in \B ,$ the linear map
\[
	\Lambda_B \colon 
	\A \ni A \mapsto \Lambda (A \otimes B) \in \C
\]
is bounded and hence its double dual map
$
	\Lambda_B^{\ast \ast}
	\colon \A^{\ast \ast} \to \C^{\ast\ast}
$
is an ultraweakly continuous linear map 
that is an extension of $\Lambda_B .$
%Then for each $A^{\prime\prime} \in \A^{\ast \ast} ,$
%each $B \in \B ,$ 
%and for a net $(A_\alpha)$ on $\A$ that ultraweakly converges to $A^{\prime\prime} ,$
%we have
%\[
%	\Lambda_B^{\ast \ast}
%	(A^{\prime \prime} )
%	=
%	\lim_\alpha \Lambda_B( A_\alpha) .
%\]
Since $\B \ni B \mapsto \Lambda_B$ is linear with respect to $B  ,$
also is $ \B \ni B \mapsto  \Lambda^{\ast \ast}_B .$
Thus we can define a linear map 
$\widetilde{\Lambda} \colon \A^{\ast \ast} \atensor\B \to \C^{\ast\ast}$ 
by
$\widetilde{\Lambda} (
\sum_i
A_i^{\prime \prime} \otimes B_i
)
:=
\sum_i
\Lambda_{B_i}^{\ast\ast }(A^{\prime\prime}_i)
$
$(A_i^{\prime\prime} \in \A^{\ast\ast} , B_i \in \B) .$
Then $\widetilde{\Lambda}$ is left-normal and an extension of $\Lambda .$
Thus it is sufficient to show the Schwarz condition
\begin{equation}
	\begin{pmatrix}
	\widetilde{\Lambda} (X^\ast X) & \widetilde{\Lambda} (X^\ast) \\
	\widetilde{\Lambda} (X) & \unit_{\C}
	\end{pmatrix}
	\geq 0
	\label{eq:scht}
\end{equation}
for each $X \in \A^{\ast \ast}\atensor \B .$
We write $X$ as 
$X = \sum_{i=1}^n A^{\prime\prime}_i \otimes B_i$
$(A^{\prime\prime}_i \in \A^{\ast\ast} , B_i \in \B) .$
Then, from the Kaplansky density theorem, there exist nets 
$(A_{i\alpha})$ $(i=1,\dots , n)$ on $\A$ satisfying
$\norm{A_{i\alpha}} \leq \norm{A_i^{\prime\prime}}$
and
$A_{i\alpha} \xrightarrow{us\ast} A_i^{\prime\prime} ,$
where $\xrightarrow{us\ast}$ denotes the convergence in 
the ultrastrong$\ast$ topology on $\cH_{\A} .$
If we put $X_\alpha := \sum_i A_{i\alpha} \otimes B_i ,$ 
then
\begin{gather*}
	\Lambda (X_\alpha)
	=
	\sum_{i=1}^n \widetilde{\Lambda} (A_{i\alpha} \otimes B_i)
	\xrightarrow{uw}
	\sum_{i=1}^n \widetilde{\Lambda} (A_{i}^{\prime\prime} \otimes B_i)
	=
	\widetilde{\Lambda}(X) ,
	\\
	\Lambda(X^\ast_\alpha)
	\xrightarrow{uw}
	\widetilde{\Lambda}(X^\ast) ,
	\\
	A^\ast_{i\alpha} A_{j\alpha}
	\xrightarrow{us\ast}
	A^{\prime\prime \ast}_i
	A^{\prime\prime}_j ,
	\\
	\Lambda (X^\ast_\alpha X_\alpha)
	=
	\sum_{i,j=1}^n
	\widetilde{\Lambda}
	(A_{i\alpha}^\ast A_{j\alpha} \otimes B^{\ast}_i B_j)
	%\\
	\xrightarrow{uw}
	\sum_{i,j=1}^n
	\widetilde{\Lambda}
	(A_{i}^{\prime\prime \ast} A_{j}^{\prime\prime} \otimes B^{\ast}_i B_j)
	%\\
	=
	\widetilde{\Lambda} (X^\ast X) ,
\end{gather*}
where $\xrightarrow{uw}$ denotes the convergence in the ultraweak topology
on $\cH_\C  .$
%(i.e.\ the topology $\sigma (\C^{\ast \ast } , \C^\ast)$).
Since $\Lambda$ is Schwarz, we obtain
\[
	0
	\leq
	\begin{pmatrix}
	\Lambda(X^\ast_\alpha X_\alpha) & \Lambda (X^\ast_\alpha) \\
	\Lambda (X_\alpha) & \unit_\C
	\end{pmatrix}
	\xrightarrow{uw}
	\begin{pmatrix}
	\widetilde{\Lambda}(X^\ast X) & \widetilde{\Lambda} (X^\ast) \\
	\widetilde{\Lambda} (X) & \unit_\C
	\end{pmatrix}
%	,
\]
on $\cH_\C \otimes \cmplx^2 ,$
which implies the Schwarz condition~\eqref{eq:scht}.
\qed
\end{proof}

\begin{lemma}
\label{lemm:schwarz}
Let $\A_1, \A_2 , \B_1 , \B_2 ,$ and $\C$ be \cstar-algebras,
let $\Phi \in \cpchset{\A_1}{\A_2}$
and 
$\Psi \in \cpchset{\B_1}{\B_2}$
be CP channels,
and let $\Lambda \in \schset{\A_2 \atensor \B_2 }{ \C}$
be a Schwarz channel.
Then 
$\Lambda \circ (\Phi \atensor \Psi) \in 
\schset{\A_1 \atensor \B_1}{\C} .$
\end{lemma}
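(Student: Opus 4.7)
The plan is to verify the three defining conditions of a Schwarz channel for $T := \Lambda \circ (\Phi \atensor \Psi) \colon \A_1 \atensor \B_1 \to \C$. Unitality and $\ast$-preservation are immediate from the corresponding properties of $\Phi$, $\Psi$ and $\Lambda$, so the substantive content is the Schwarz inequality $T(X^\ast X) \geq T(X)^\ast T(X)$ for $X \in \A_1 \atensor \B_1$. Setting $Y := (\Phi \atensor \Psi)(X) \in \A_2 \atensor \B_2$ and applying the Schwarz hypothesis on $\Lambda$ gives $\Lambda(Y^\ast Y) \geq \Lambda(Y)^\ast \Lambda(Y) = T(X)^\ast T(X)$, which reduces the desired inequality to showing that the difference
\[
D := (\Phi \atensor \Psi)(X^\ast X) - (\Phi \atensor \Psi)(X)^\ast (\Phi \atensor \Psi)(X) \in \A_2 \atensor \B_2
\]
is mapped by $\Lambda$ to a positive element of $\C$.

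To achieve this, I would exhibit $D$ explicitly as a finite sum $\sum_k Z_k^\ast Z_k$ with each $Z_k \in \A_2 \atensor \B_2$; applying $\Lambda$ termwise and invoking its Schwarz property on each $Z_k$ would then yield $\Lambda(D) \geq \sum_k \Lambda(Z_k)^\ast \Lambda(Z_k) \geq 0$. Writing $X = \sum_{i=1}^n A_i \otimes B_i$ and using the bilinear telescoping identity $a \otimes b - a' \otimes b' = a \otimes (b - b') + (a - a') \otimes b'$, I would split
\[
D = \sum_{i,j} \Phi(A_i^\ast A_j) \otimes [\Psi(B_i^\ast B_j) - \Psi(B_i)^\ast \Psi(B_j)] + \sum_{i,j} [\Phi(A_i^\ast A_j) - \Phi(A_i)^\ast \Phi(A_j)] \otimes \Psi(B_i)^\ast \Psi(B_j).
\]
Via the Stinespring dilation $\Phi(\cdot) = V^\ast \pi(\cdot) V$ with $V$ an isometry (by unitality) and the analogous dilation for $\Psi$, each of the four $n \times n$ factor matrices appearing above has the form $[R_i^\ast R_j]$ or $[R_i^\ast (\unit - VV^\ast) R_j]$, and is therefore positive in $\Mn(\A_2)$ or $\Mn(\B_2)$ respectively.

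Taking square roots in the \cstar-algebras $\Mn(\A_2)$ and $\Mn(\B_2)$ to factor $[\alpha_{ij}] = C^\ast C$ with $C = [C_{ki}] \in \Mn(\A_2)$ and $[\gamma_{ij}] = E^\ast E$ with $E = [E_{li}] \in \Mn(\B_2)$ for each such pair of positive factor matrices, a routine rearrangement gives
\[
\sum_{i,j} \alpha_{ij} \otimes \gamma_{ij} = \sum_{k,l} \left(\sum_i C_{ki} \otimes E_{li}\right)^\ast \left(\sum_j C_{kj} \otimes E_{lj}\right),
\]
a genuine sum of squares in $\A_2 \atensor \B_2$. Applying this to both summands of the telescopic splitting produces the required decomposition of $D$ and completes the argument. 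The main subtlety I anticipate is that $\A_2 \atensor \B_2$ carries no canonical \cstar-norm, so positivity of $D$ must be handled at the intrinsic algebraic level as a sum of elements $Z^\ast Z$; the explicit sum-of-squares decomposition above is precisely what lets $D$ be fed into the Schwarz hypothesis on $\Lambda$. It will be important to verify that the square-root step delivers matrices $C$ and $E$ whose entries really lie in $\A_2$ and $\B_2$ themselves, which is guaranteed because $\Mn(\A_2)$ and $\Mn(\B_2)$ are \cstar-algebras closed under the functional calculus of their positive elements.
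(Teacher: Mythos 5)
Your proof is correct, and its core mechanism is the same as the paper's: reduce the Schwarz inequality for the composite to showing that $\Lambda$ maps the ``defect'' element of $\A_2\atensor\B_2$ to a positive element of $\C$, exhibit that defect as a finite sum $\sum_k Z_k^\ast Z_k$ by factoring positive matrices in $\Mn(\A_2)$ and $\Mn(\B_2)$, and then apply $\Lambda$'s Schwarz inequality termwise. The organization differs, though. The paper never telescopes: it first proves that $\Lambda\circ(\Phi\atensor\id_{\B_2})$ is Schwarz and then composes with $\id_{\A_1}\atensor\Psi$ by the symmetric argument. In each one-sided stage the defect is $\sum_{i,j}\bigl(\Phi(A_i^\ast A_j)-\Phi(A_i^\ast)\Phi(A_j)\bigr)\otimes B_i^\ast B_j$, whose second factor matrix $[B_i^\ast B_j]$ is already a Gram matrix, so only a single square-root factorization $\sum_k C_{ki}^\ast C_{kj}$ is needed and the sum of squares $\sum_k Y_k^\ast Y_k$ with $Y_k=\sum_i C_{ki}\otimes B_i$ drops out immediately; the positivity of the defect matrix is obtained from the operator Schwarz inequality for the CP map $\Phi_n$ rather than from a Stinespring dilation. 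Your one-shot version costs you the bilinear telescoping identity, positivity of four factor matrices instead of one, and a double factorization indexed by $(k,l)$, but it buys nothing extra here since both routes ultimately rest on the same sum-of-squares device; the two-stage composition is the leaner bookkeeping. All the points you flag as potential subtleties (no canonical \cstar-norm on $\A_2\atensor\B_2$, entries of the square root lying in the algebra itself) are genuine and are handled correctly in your argument.
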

\begin{proof}
We first show that 
$\widetilde{\Lambda} := \Lambda \circ (\Phi \atensor \id_{\B_2})
\colon \A_1 \atensor \B_2 \to \C
$
is Schwarz.
For this we have to show the Schwarz inequality 
\begin{equation}
	\Lambda \circ \widetilde{\Phi} (X^\ast X)
	\geq 
	\Lambda \circ \widetilde{\Phi} (X^\ast )
	\Lambda \circ \widetilde{\Phi} (X )
	\notag
\end{equation}
for each $X \in \A_1 \atensor \B_2 ,$
where $\widetilde{\Phi} := \Phi \atensor \id_{\B_2} .$
From the Schwarz condition of $\Lambda ,$
we have
\[
	\Lambda \left(
	\widetilde{\Phi} (X^\ast ) 
	\widetilde{\Phi} (X)
	\right)
	\geq
	\Lambda \circ
	\widetilde{\Phi} (X^\ast )
	\Lambda \circ
	\widetilde{\Phi} (X) .
\]
Thus it is sufficient to show 
\begin{equation}
	\Lambda \left(
		\widetilde{\Phi} (X^\ast  X)
		-
		\widetilde{\Phi} (X^\ast )
		\widetilde{\Phi} (X)
	\right)
	\geq
	0.
	\label{eq:schineq}
\end{equation}
We write $X$ as 
$X = \sum_{i=1}^n A_i \otimes B_i$
$(A_i \in \A_1 , B_i \in \B_2) .$
Since $\Phi$ is a CP channel, the map
\[
	\Phi_n
	\colon
	\Mn (\A_1)
	\ni
	(A_{ij})_{i,j =1}^n
	\mapsto
	(\Phi (A_{ij} ) )_{i,j =1}^n
	\in 
	\Mn (\A_2)
\]
is also a CP channel.
Thus from the Schwarz inequality of $\Phi_n ,$
the matrix
$
(
\Phi (A_i^\ast A_j) 
-
\Phi (A_i^\ast) \Phi ( A_j)
)_{i,j =1}^n
\in
\Mn (\A_2)
$
is positive.
Therefore there exists a matrix $(C_{ij})_{i,j =1}^n \in \Mn (\A_2)$
such that
\[
	\Phi (A_i^\ast A_j) 
	-
	\Phi (A_i^\ast) \Phi ( A_j)
	=
	\sum_{k=1}^n
	C_{ki}^\ast 
	C_{kj} .
\]
Thus we have
\begin{align*}
	\widetilde{\Phi} (X^\ast X)
	-
	\widetilde{\Phi} (X^\ast)
	\widetilde{\Phi} (X)
	&=
	\sum_{i,j =1}^n
	(
	\Phi (A_i^\ast A_j) 
	-
	\Phi (A_i^\ast) \Phi ( A_j)
	)
	\otimes 
	B_i^\ast B_j
	\\
	&=
	\sum_{i,j,k =1}^n
	C_{ki}^\ast C_{kj}
	\otimes 
	B_i^\ast B_j
	\\
	&=
	\sum_{k=1}^n
	Y_k^\ast Y_k ,
\end{align*}
where $Y_k := \sum_{i=1}^n C_{ki} \otimes B_i .$
Consequently,
\[
	(\text{LHS of \eqref{eq:schineq}})
	=
	\sum_{k=1}^n 
	\Lambda (Y_k^\ast Y_k)
	\geq 
	\sum_{k=1}^n 
	\Lambda (Y_k^\ast )
	\Lambda (Y_k)
	\geq 0.
\]
Therefore $\widetilde{\Lambda}$ is Schwarz.

We can analogously show that 
$
\Lambda  \circ (\Phi \atensor \Psi)
=
\widetilde{\Lambda}
\circ
(\id_{\A_1} \atensor \Psi)
$
is also Schwarz,
which completes the proof.
\qed
\end{proof}

\begin{lemma}
\label{lemm:coarse}
Let 
$\E = (\M , \Theta , (\phth)_{\thin}),$
$\E_1 = (\M_1 , \Theta , (\phtho)_{\thin}) ,$
$\F = (\N , \Theta , (\psth)_{\thin}),$
and 
$\F_1 = (\N_1 , \Theta , (\pstho)_{\thin})$
be statistical experiments.
%and let $\gamma$ be $\min$ or $\max .$
Then we have the following.
\begin{enumerate}
\item
$\F \clp^{\CP}_{\min , \bin} \E$ 
and 
$\F_1 \concp \F$
imply 
$\F_1 \clp^{\CP}_{\min , \bin} \E  .$
\item
$\F \clp^{\CP}_{\min , \bin } \E$ 
and
$\E \eqncp \E_1$
imply
$\F \clp^{\CP}_{\min , \bin } \E_1 .$ 
\item
$\F \clp^{\Sch}_{\alg , \lnor } \E$
and
$\E \eqncp \E_1$
imply
$\F \clp^{\Sch}_{\alg, \lnor} \E_1 .$
\end{enumerate}
\end{lemma}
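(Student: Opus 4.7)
The plan is to prove all three parts by constructing the required channel explicitly as a pre- and post-composition of the given channel $\Lambda$ with the CP (or Schwarz) channels witnessing the hypothesized relations, and then verifying in each case (i) the required positivity/Schwarz property together with the appropriate normality condition (binormality in (1) and (2), left-normality in (3)), and (ii) the no-disturbance and extraction identities.

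For part~(1), let $\Lambda \in \cpchset{\M \mintensor \N}{\M}$ be a binormal CP channel extracting $\F$ without disturbing $\E$, and let $\Psi \in \ncpchset{\N_1}{\N}$ witness $\F_1 \concp \F$, so $\pstho = \psth \circ \Psi$. I would take $\Lambda_1 := \Lambda \circ (\id_\M \mintensor \Psi)$, which is a CP channel by Proposition~\ref{prop:cp}. Binormality of $\Lambda_1$ follows from the binormality of $\Lambda$ and the normality of $\Psi$, evaluated on simple tensors. The two disturbance/extraction identities for the pair $(\F_1 , \E)$ then follow immediately from those for $(\F , \E)$ by using $\Psi(\unit_{\N_1}) = \unit_\N$ and $\pstho = \psth \circ \Psi$.

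For part~(2), let $\Phi \in \ncpchset{\M}{\M_1}$ and $\Phi' \in \ncpchset{\M_1}{\M}$ witness $\E \eqncp \E_1$, so $\phth = \phtho \circ \Phi$ and $\phtho = \phth \circ \Phi'$. I would define $\Lambda_1 := \Phi \circ \Lambda \circ (\Phi' \mintensor \id_\N)$. Again by Proposition~\ref{prop:cp} and closure of CP channels under composition, $\Lambda_1 \in \cpchset{\M_1 \mintensor \N}{\M_1}$; binormality is inherited because $\Phi$ and $\Phi'$ are normal and $\Lambda$ is binormal. Direct substitution into $\phtho \circ \Lambda_1(A_1 \otimes \unit_\N)$ and $\phtho \circ \Lambda_1(\unit_{\M_1} \otimes B)$, chaining $\phtho = \phth \circ \Phi'$, the no-disturbance condition on $\Lambda$, and $\phth = \phtho \circ \Phi$ (together with $\Phi'(\unit_{\M_1}) = \unit_\M$), yields the required identities for $(\F , \E_1)$.

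For part~(3) the same construction $\Lambda_1 := \Phi \circ \Lambda \circ (\Phi' \atensor \id_\N)$ works, now with $\Lambda$ only Schwarz and left-normal and on the algebraic tensor product. The only nontrivial preservation step is the Schwarz property of $\Lambda \circ (\Phi' \atensor \id_\N)$, which is exactly the content of Lemma~\ref{lemm:schwarz}; composing on the left with the CP (hence Schwarz) channel $\Phi$ preserves the Schwarz inequality by the standard two-line estimate using positivity of $\Phi$ and the Schwarz inequality of both maps. Left-normality passes through the composition because $\Phi$ and $\Phi'$ are normal and $\Lambda$ is left-normal, and the disturbance/extraction identities are verified exactly as in~(2). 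The main technical obstacle is precisely the Schwarz preservation in~(3): Schwarz maps are not in general closed under tensoring, which is why the CP assumption on $\Phi'$ is essential and why Lemma~\ref{lemm:schwarz} must be invoked. Everything else is bookkeeping once the right candidate channels are written down.
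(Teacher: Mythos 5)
Your proposal is correct and follows essentially the same route as the paper: part (1) via $\Lambda\circ(\id_\M\mintensor\Gamma)$, parts (2) and (3) via the sandwich $\Psi\circ\Lambda\circ(\Phi\otimes\id_\N)$ with the two channels witnessing $\E\eqncp\E_1$, invoking Proposition~\ref{prop:cp} for the CP case and Lemma~\ref{lemm:schwarz} for the Schwarz case. The verification of binormality/left-normality and of the extraction and no-disturbance identities matches the paper's computation.
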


\begin{proof}
\begin{enumerate}
\item
Let $\Lambda \in \cpchset{\M \mintensor \N}{\M}$
be a binormal channel 
that extracts $\F$ without disturbing $\E$
and let $\Gamma \in \ncpchset{\N_1}{\N}$
be a normal channel satisfying
$\pstho = \psth \circ \Gamma$
for all $\thin .$
From Proposition~\ref{prop:cp}
the tensor product map
$\id_{\M} \mintensor \Gamma \in 
\cpchset{\M \mintensor \N_1}{\M \mintensor \N} $
is well-defined.
Then 
$\Lambda_1 
:= \Lambda 
\circ 
(\id_{\M} \mintensor \Gamma)
\in \cpchset{\M \mintensor \N_1}{\M}$
is a binormal channel 
that extracts $\F_1 $ without disturbing $\E ,$
which proves $\F_1 \clp^{\CP}_{\min , \bin} \E .$
\item
%We only prove the claim for $\clp^{\CP}_{\min, \bin} $
%since the proof for $\clp^{\Sch}_{\alg , \lnor}$ is analogous.
Let $\Lambda \in \cpchset{\M \mintensor \N }{\M}$
be a binormal channel that extracts $\F$ without disturbing 
$\E $
and let $\Phi \in \ncpchset{\M_1}{\M}$
and $\Psi \in \ncpchset{\M}{\M_1}$
be normal channels satisfying
\begin{gather}
	\phtho = \phth \circ \Phi ,
	\label{eq:co1}
	\\
	\phth = \phtho \circ \Psi
	\label{eq:co2}
\end{gather}
for all $\thin .$
From Proposition~\ref{prop:cp},
the tensor product channel
$\Phi \mintensor \id_\N
\in 
\cpchset{\M_1 \mintensor \N}{\M \mintensor \N}
$
is well-defined.
We define the channel
$
\Lambda^\prime
:=
\Psi \circ \Lambda \circ (\Phi \mintensor \id_\N)
\in
\cpchset{\M_1 \mintensor \N}{\M_1} .
$
Then for each $\thin ,$ each $A\in \M_1,$
and each $B \in \N ,$
we have
\begin{align*}
	\Lambda^\prime (A \otimes B)
	&=
	\Psi \circ \Lambda (\Phi (A) \otimes B) ,
	\\
	\phtho \circ \Lambda^\prime
	(A \otimes \unit_\N)
	&=
	\phtho \circ \Psi \circ \Lambda
	(\Phi (A) \otimes \unit_\N)
	\\
	&=
	\phth \circ \Lambda 
	(\Phi (A) \otimes \unit_\N)
	\\
	&=
	\phth \circ \Phi (A)
	\\
	&=
	\phtho (A) , \\
	\phtho \circ \Lambda^\prime
	(\unit_{\M_1} \otimes B)
	&=
	\phtho \circ \Psi \circ \Lambda
	(\Phi (\unit_{\M_1}) \otimes B)
	\\
	&=
	\phth
	\circ 
	\Lambda
	(\unit_\M \otimes B)
	\\
	&=
	\psth (B).
\end{align*}
Therefore $\Lambda^\prime$
is a binormal CP channel that extracts
$\F $ without disturbing $\E_1 ,$
which proves 
$\F \clp^{\CP}_{\min , \bin} \E .$
\item
Let $\Lambda \in \schset{\M \atensor \N}{\M}$
be a left-normal Schwarz channel that extracts $\F$ without disturbing $\E$
and let $\Phi \in \ncpchset{\M_1}{\M}$
and $\Psi \in \ncpchset{\M}{\M_1}$
be normal channels satisfying
\eqref{eq:co1} and \eqref{eq:co2}.
From Lemma~\ref{lemm:schwarz}
we have 
$\Lambda \circ (\Phi \atensor \id_\N)
\in
\schset{\M_1 \atensor \N}{\M} ,$
and hence
$
\Lambda^\prime
:=
\Psi \circ \Lambda \circ (\Phi \atensor \id_\N)
\in
\schset{\M_1 \atensor \N}{\M_1} .
$
Then we can show that 
$\Lambda^\prime$
is a left-normal Schwarz channel that
extracts $\F$ without disturbing $\E_1$
in the same way as in the proof of claim~2.
Thus $\F \clp^{\Sch}_{\alg ,  \lnor} \E_1 . $
\qed
\end{enumerate}
\end{proof}

\begin{lemma}
\label{lemm:abelian}
Let $\E = (\M , \Theta , (\phth)_{\thin})$ be a statistical experiment.
Define 
\[\F := (\zent (\M) , \Theta , (\psth)_{\thin}) ,\]
where $\psth$ is the restriction of $\phth$ to the center
$\zent (\M) .$
Then $\F \clp^{\CP}_{\min, \bin} \E .$
\end{lemma}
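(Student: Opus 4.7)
The natural candidate is the multiplication map $\Lambda_0 \colon \M \atensor \zent(\M) \to \M$ defined on elementary tensors by $\Lambda_0(A \otimes Z) := AZ$ and extended by linearity. My plan is to show that $\Lambda_0$ is a unital $\ast$-homomorphism, that it extends continuously to a binormal CP channel on $\M \mintensor \zent(\M)$, and that this extension satisfies the required conditions.

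First, I would verify that $\Lambda_0$ is a unital $\ast$-homomorphism. Unitality is clear from $\Lambda_0(\unit_\M \otimes \unit_{\zent(\M)}) = \unit_\M$. For multiplicativity, one uses the centrality of $Z_1, Z_2 \in \zent(\M)$ to compute
\[
\Lambda_0((A_1 \otimes Z_1)(A_2 \otimes Z_2)) = A_1 A_2 Z_1 Z_2 = (A_1 Z_1)(A_2 Z_2) = \Lambda_0(A_1 \otimes Z_1)\Lambda_0(A_2 \otimes Z_2),
\]
and similarly $\Lambda_0((A \otimes Z)^\ast) = A^\ast Z^\ast = (AZ)^\ast = \Lambda_0(A\otimes Z)^\ast$, where centrality is again used to move $Z^\ast$ past $A^\ast$.

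Next I would extend $\Lambda_0$ to the minimal C$^\ast$-tensor product. Since $\zent(\M)$ is commutative, the paper already notes that $\minnorm{\cdot} = \maxnorm{\cdot}$ is the unique C$^\ast$-norm on $\M \atensor \zent(\M)$. A $\ast$-homomorphism from a pre-C$^\ast$-algebra into a C$^\ast$-algebra is automatically contractive for any C$^\ast$-norm on the domain, so $\Lambda_0$ extends uniquely to a unital $\ast$-homomorphism $\Lambda \colon \M \mintensor \zent(\M) \to \M$, and every unital $\ast$-homomorphism is CP, placing $\Lambda \in \cpchset{\M \mintensor \zent(\M)}{\M}$.

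For binormality, I would fix $Z \in \zent(\M)$ and note that the map $\M \ni A \mapsto AZ \in \M$ is ultraweakly continuous as multiplication by a fixed element in a von Neumann algebra; hence $\Lambda_Z^L$ coincides with this map and is normal. The argument for $\Lambda_A^R$ with $A \in \M$ fixed is identical. Finally, the accessibility conditions are immediate: for any $\thin$, $A \in \M$ and $Z \in \zent(\M)$,
\[
\phth \circ \Lambda(A \otimes \unit_{\zent(\M)}) = \phth(A), \qquad \phth \circ \Lambda(\unit_\M \otimes Z) = \phth(Z) = \psth(Z),
\]
establishing $\F \clp^{\CP}_{\min,\bin} \E$. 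I do not foresee a substantive obstacle here; the only subtlety is checking that the multiplication map truly respects the algebraic structure of the tensor product, which is precisely where the centrality hypothesis on the second factor enters.
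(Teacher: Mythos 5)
Your proposal is correct and follows essentially the same route as the paper: the paper also takes the multiplication map $A \otimes Z \mapsto AZ$, realized as a representation of $\M \maxtensor \zent(\M) = \M \mintensor \zent(\M)$ into $\M$, and checks the same two accessibility identities. You merely spell out in more detail the points the paper leaves implicit (multiplicativity via centrality, automatic contractivity and complete positivity of $\ast$-homomorphisms, and binormality from ultraweak continuity of multiplication by a fixed element), all of which are accurate.
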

\begin{proof}
Since $\M$ and $\zent (\M)$ commute,
there exists a representation
$
\pi \colon \M \maxtensor \zent (\M)
=
\M \mintensor \zent (\M)
\to
\M
$
such that
$\pi (A \otimes Z) = AZ$
($A \in \M , Z \in \zent (\M)$).
Then for each $A \in \M,$
each $Z \in \zent (\M) ,$
and each $\thin ,$
we have
\begin{gather*}
	\phth \circ \pi
	(A \otimes \unit_{\M})
	=\phth (A) ,
	\\
	\phth \circ \pi
	(\unit_\M \otimes Z)
	= 
	\phth (Z)
	=
	\psth (Z).
\end{gather*}
Therefore 
$\pi \in \cpchset{\M \mintensor \zent (\M)}{\M}$
is a binormal channel 
that extracts $\F$
without disturbing $\E ,$
which proves 
$\F \clp^{\CP}_{\min, \bin} \E .$
\qed
\end{proof}

\noindent
\textit{Proof of Theorem~\ref{theo:main}.}
Let $\M \gtensor \N$ be an arbitrary \cstar-tensor product of $\M$ and $\N ,$
let $\E_0 =\seEz$ be the minimal sufficient statistical experiment
satisfying $\E_0 \eqncp \E ,$
and let $\E_\cl = (\zent (\M_0) , \Theta , (\phth^\cl)_{\thin})$ be the 
classical part of $\E .$
From the definitions of the accessibility relations $\clp^X_{y, z} ,$
the following implications immediately follow:
\begin{gather*}
	\F \clp^{\CP}_{y, z}\E
	\implies 
	\F \clp^{\Sch}_{y,z} \E,
	\\
	\F \clp^X_{\gamma , z } \E
	\implies
	\F \clp^X_{\alg , z} \E ,
	\\
	\F \clp^X_{y , \bin } \E
	\implies
	\F \clp^X_{y , \lnor} \E
	\implies
	\F \clp^X_{y } \E  .
\end{gather*}
Hence
\[
	\F \clp^{\CP}_{\gamma , \bin} \E
	\implies
	\F \clp^{X}_{y,z} \E
	\implies
	\F \clp^{\Sch}_{\alg} \E 
\]
if $y = \gamma $ or $y = \alg.$
Thus it is sufficient to establish the implications 
(i)$\implies$(ii)$\implies$(iii)$\implies$(iv)$\implies$(v)
for the following conditions:
\begin{enumerate}[(i)]
\item
$\F \clp^{\Sch}_\alg \E ;$
\item
$\F \clp^{\Sch}_{\alg , \lnor} \E ;$
\item
$\F \concp \E_\cl ;$
\item
$\F \clp^{\CP}_{\min , \bin} \E ;$
\item
$\F \clp^{\CP}_{\gamma , \bin} \E .$
\end{enumerate}

(i)$\implies$(ii).
Assume~(i) and take a channel
$\Lambda \in \schset{\M \atensor \N}{\M}$
that extracts $\F$ without disturbing $\E .$
Then from Lemma~\ref{lemm:lnor}, 
$\Lambda$ extends to a left-normal channel
$\widetilde{\Lambda}\in \schset{\M^{\ast\ast}\atensor \N}{\M^{\ast\ast}} .$
If we write the normal extension of $\E $ as 
$\overline{\E} = (\M^{\ast\ast} , \Theta , (\overline{\phth})_{\thin}) ,$
then for each $A \in \M ,$ 
each $B \in \N ,$ 
and each $\thin ,$
we have
\begin{gather}
	\overline{\phth} \circ \widetilde{\Lambda}
	(A \otimes \unit_\N)
	=
	\phth \circ \Lambda (A \otimes \unit_\N)
	=
	\phth (A) 
	=\overline{\phth} (A) ,
	\label{eq:onM} 
	\\
	\overline{\phth} \circ \widetilde{\Lambda}
	(\unit_{\M^{\ast\ast}} \otimes B)
	=
	\phth \circ \Lambda (\unit_\M \otimes B)
	=
	\psth (B) .
	\notag
\end{gather}
Since $\M$ is ultraweakly dense in $\M^{\ast \ast} $
and $\widetilde{\Lambda}$ is left-normal,
\eqref{eq:onM} implies 
\[
	\overline{\phth} \circ \widetilde{\Lambda}
	(A^{\prime\prime} \otimes \unit_\N)
	=
	\overline{\phth} 
	(A^{\prime\prime})
\]
for all $A^{\prime\prime} \in \M^{\ast\ast} .$
Therefore we obtain $\F \clp^{\Sch}_{\alg, \lnor} \overline{\E} .$
From Lemmas~\ref{lemm:nex} and \ref{lemm:coarse},
this implies $\F \clp^{\Sch}_{\alg, \lnor} \E .$

(ii)$\implies$(iii).
Assume (ii).
From Lemma~\ref{lemm:coarse} we have 
$\F \clp^{\Sch}_{\alg , \lnor} \E_0$
and hence we can take a left-normal channel
$\Lambda \in \schset{\M_0 \atensor \N}{\M_0}$
that extracts $\F$ without disturbing $\E_0 .$
We define Schwarz channels $\Lambda_L \in \nsch{\M_0}$
and 
$\Lambda_R \in \schset{\N}{\M_0}$
by
$
	\Lambda_L (A)
	:=
	\Lambda (A \otimes \unit_\N)
$
and
$\Lambda_R (B) := \Lambda (\unit_{\M_0} \otimes B)$
$(A \in \M_0, B \in \N) .$
Since we have
$
	\phthz \circ \Lambda_L
	=
	\phthz
$
for all $\thin ,$
the minimal sufficiency of $\E_0$ implies
$\Lambda_L = \id_{\M_0} ,$
i.e.\
$
	\Lambda (A \otimes \unit_\N)
	=
	A
$
for all $A \in \M_0 .$
Thus for each $A \in \M_0$ we have
\begin{gather*}
	\Lambda ( (A \otimes \unit_\N)^\ast (A \otimes \unit_\N))
	= A^\ast A
	= \Lambda  (A \otimes \unit_\N)^\ast \Lambda (A \otimes \unit_\N) ,
	\\
	\Lambda ( (A \otimes \unit_\N) (A \otimes \unit_\N)^\ast)
	= A A^\ast
	= \Lambda  (A \otimes \unit_\N) \Lambda (A \otimes \unit_\N)^\ast .
\end{gather*}
Therefore $\M_0 \otimes \unit_\N$ is contained in 
the multiplicative domain of $\Lambda .$
Thus, from Lemma~\ref{lemm:mdomain}, for each $A \in \M_0$ and 
each $B \in \N ,$ we have
\begin{align*}
	A \Lambda_R(B)
	&=
	\Lambda (A \otimes \unit_\N)
	\Lambda (\unit_{\M_0} \otimes B)
	\\
	&=
	\Lambda (A \otimes B)
	\\
	&=
	\Lambda (\unit_{\M_0} \otimes B)
	\Lambda (A \otimes \unit_\N)
	\\
	&=
	\Lambda_R (B) A ,
\end{align*}
which implies
$\Lambda_R (\N) \subseteq \zent (\M_0) .$
Since $\zent (\M_0)$ is commutative, 
$\Lambda_R$ is CP.
Furthermore, we have
$
\psth
= \phthz \circ \Lambda_R
= \phth^\cl \circ \Lambda_R 
$
for all $\thin ,$
which implies $\F \cocp \E_\cl .$
Thus from Lemma~\ref{lemm:coeq} we obtain
$\F \concp \E_\cl .$

(iii)$\implies$(iv) 
follows from Lemmas~\ref{lemm:coarse} and \ref{lemm:abelian}.

(iv)$\implies$(v).
Assume (iv) and let 
$\Lambda \in \cpchset{\M \mintensor \N}{\M}$
be a binormal CP channel that extracts $\F$ without disturbing $\E .$ 
From the minimality of the norm $\minnorm{\cdot}$ on $\M \atensor \N ,$
there exists a representation 
$\pi_{\min} \colon \M \gtensor \N \to \M \mintensor \N$
such that $\pi_{\min} (A \otimes B) = A \otimes B$
$(A \in \M , B \in \N).$
We define a channel 
$\Lambda_\gamma \in \cpchset{\M \gtensor \N}{\M}$
by $\Lambda_\gamma := \Lambda \circ \pi_{\min}.$
Then $\Lambda_\gamma$ is a binormal channel 
that extracts $\F$ without disturbing $\E ,$
which implies $\F \clp^{\CP}_{\gamma , \bin} \E .$
\qed
%\end{proof}

%%%%%%%%%%%%%%%%remark on lindblad's theorem$$$$$$$$$$$$$$$
\begin{remark}
\label{rem:lindblad}
The proof of (ii)$\implies$(iii) in Theorem~\ref{theo:main}
is analogous to the proof of Lindblad's ``general no-cloning theorem''
(\cite{Lindblad1999}, Theorem~1),
in which the set of invariant states of a given broadcasting channel is considered.
%Our approach, on the other hand, 
%studies for a given pair of statistical experiments $\E$ and $\F$ 
%what kind of channel can extract $\F$ without disturbing $\E .$
\end{remark}

Since all the relations $\F \clp^X_{y,Z} \E$ in Definition~\ref{defi:acrelation} coincide,
from now on we adopt the simpler notation $\F \clp \E $ instead of $\F \clp^X_{y,Z} \E .$

\subsection{No-broadcasting} 
\label{subsec:nb}
A statistical experiment $\E = \seE$
is called \emph{broadcastable in the sense of algebraic tensor product}
if there exists a Schwarz channel
$\Lambda \in \schset{\M \atensor \M}{\M}$ 
such that
\[
	\phth (A)
	=
	\phth \circ \Lambda (A \otimes \unit_\M)
	=
	\phth \circ \Lambda ( \unit_\M   \otimes A ) 
\]
for all $\thin $
and all $A \in \M .$
From the definition, we can easily see that 
$\E$ is broadcastable in the sense of algebraic tensor product
if and only if $\E \clp \E .$
Therefore from Theorem~\ref{theo:main}
we immediately obtain

\begin{corollary}[No-broadcasting theorem]
\label{coro:nb}
A statistical experiment $\E $
is broadcastable in the sense of algebraic tensor product 
if and only if
$\E$ is normally CP equivalent to a classical statistical experiment.
\end{corollary}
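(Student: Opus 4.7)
The plan is to reduce the statement to Theorem~\ref{theo:main} via the observation, already recorded before the corollary, that broadcastability in the sense of the algebraic tensor product is exactly the relation $\E \clp \E$: the two identities
\[
\phth(A) = \phth \circ \Lambda(A \otimes \unit_\M) = \phth \circ \Lambda(\unit_\M \otimes A) \quad (\thin,\ A \in \M)
\]
are precisely the condition that $\Lambda$ extracts $\E$ without disturbing $\E$. Applying Theorem~\ref{theo:main} with $\F := \E$ then identifies broadcastability with the condition $\E \concp \E_\cl$, so the corollary reduces to showing that this is equivalent to $\E$ being normally CP equivalent to some classical statistical experiment.

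The ``only if'' direction is immediate once the reverse relation $\E_\cl \concp \E$ is verified, and this always holds: writing $\E_0 = \seEz$ for the minimal sufficient experiment equivalent to $\E$, the inclusion $\iota \colon \zent(\M_0) \hookrightarrow \M_0$ is a normal unital $\ast$-homomorphism with $\phthz \circ \iota = \phth^\cl$, so $\E_\cl \concp \E_0 \eqncp \E$. Combined with $\E \concp \E_\cl$ this gives $\E \eqncp \E_\cl$, and $\E_\cl$ is classical by construction of the classical part.

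For the converse, suppose $\E \eqncp \F$ with $\F = (\C, \Theta, (\psth)_{\thin})$ and $\C$ commutative. The multiplication map $m \colon \C \atensor \C \to \C$, $m(a \otimes b) := ab$, is a well-defined unital $\ast$-homomorphism because $\C$ is abelian, hence a CP channel; the broadcasting identities then hold trivially, so $\F \clp \F$. Applying Part~1 of Lemma~\ref{lemm:coarse} with $\F_1 := \E$ (using $\E \concp \F$) yields $\E \clp \F$, and then Part~2 with $\E_1 := \E$ (using $\F \eqncp \E$) gives $\E \clp \E$, i.e.\ broadcastability of $\E$. I anticipate no substantive obstacle here; the only point worth flagging is that $m$ is automatically CP because the codomain is commutative, so no subtlety about the choice of $C^\ast$-norm on $\C \atensor \C$ enters.
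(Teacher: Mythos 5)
Your proposal is correct and follows the paper's own route: the paper simply observes that broadcastability is the relation $\E \clp \E$ and invokes Theorem~\ref{theo:main}, leaving the remaining equivalence as immediate. You have merely supplied the routine details the paper omits — that $\E_\cl \concp \E$ always holds via the central inclusion, so $\E \concp \E_\cl$ amounts to $\E \eqncp \E_\cl$, and that a classical experiment broadcasts itself via the multiplication map, which transports back to $\E$ by Lemma~\ref{lemm:coarse} — all of which are sound.
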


\begin{remark}
\label{rem:broadcast}
In \cite{Kaniowski2015}, the broadcastability of normal states 
was considered by identifying the outcome composite system
with the normal tensor product $\M \ntensor \M ,$
which is more restrictive than our broadcastability condition here.
Indeed, a statistical experiment $\E$ is broadcastable in the sense of 
\cite{Kaniowski2015} if and only if $\E$ is normally CP equivalent 
to a classical statistical experiment 
with an \emph{atomic} outcome von Neumann algebra,
and hence any minimal sufficient statistical experiment with 
a non-atomic commutative outcome algebra is not broadcastable in this sense.
For the difference of the \cstar- and normal tensor products 
from the view point of broadcasting,
see also Section~7 of \cite{kuramochi2017incomp}.
\end{remark}

%Koashi-Imoto decomposition
\subsection{Density operators}
\label{subsec:KI} 
Let us consider a statistical experiment 
$\E = (\LH , \Theta , (\phth)_{\thin})$
for a Hilbert space $\cH .$
As usual we regard each normal state $\phth$
as a density operator $\rth$ on $\cH$
satisfying
$\phth (A) = \tr (\rth A)$
$(A \in \LH) .$ 
By restricting the outcome Hilbert space $\cH$ if necessary, 
we assume that $\E$ is faithful.
Let $\M_0 \subseteq \LH$ be the minimal sufficient subalgebra 
with respect to $(\phth)_{\thin}$
and let $\condi $ be the normal conditional expectation from $\LH$
onto $\M_0$ satisfying
$\phth \circ \condi = \phth$ for all $\thin .$
Since $(\phth)_{\thin}$
is faithful, so is $\condi .$
Hence,
according to \cite{takesakivol1} (Chapter~V, Section~2, Exercise~8)
$\M_0$ is atomic, i.e.,
we have the following decompositions:
\begin{gather}
	\cH
	=
	\bigoplus_{i \in I}
	\cH_i 
	\otimes
	\cK_i
	,
	\label{eq:KIH}
	\\
	\M_0
	=
	\bigoplus_{i \in I}
	\calL (\cH_i )
	\ntensor 
	\cmplx \unit_{\cK_i} ,
	\notag
\end{gather}
where $\cH_i $ and $\cK_i$ are Hilbert spaces
and $\unit_\cK$ denotes the identity operator on a Hilbert space $\cK .$
(We can also show this from \cite{10.2307/24491050}).
Then, as in the finite-dimensional case
(\cite{Lindblad1999}, Section~4; \cite{Hayden2004}, Appendix~A),
we can show the following decompositions for $\condi$ and $\rth :$
\begin{gather}
	\condi (A)
	=
	\bigoplus_{ i \in I}
	\tr_{\cK_i} (
	P_i A P_i
	(\unit_{\cH_i} \otimes \sigma_i)
	)
	\otimes \unit_{\cK_i} 
	\quad
	(A \in \LH) ,
	\notag
	\\
	\rth
	=
	\bigoplus_{i \in I}
	q_\theta (i)
	\rho_{i,\theta }
	\otimes 
	\sigma_i,
	\label{eq:KIrho}
\end{gather}
where $P_i$ is the orthogonal projection onto $\cH_i \otimes \cK_i ,$
$\tr_{\cK_i} (\cdot)$ is the partial trace over $\cK_i ,$
$q_\theta (i)$ is a probability distribution over $i \in I$
for each $\thin ,$
and $\rho_{i ,\theta}$ and $\sigma_i$
are density operators on $\cH_i$
and $\cK_i, $
respectively. 
If $\cH$ is finite-dimensional,
the decomposition given by \eqref{eq:KIH} and \eqref{eq:KIrho}
coincides with the maximal decomposition 
obtained by Koashi and Imoto~\cite{PhysRevA.66.022318}.

Since we have
$
\zent (\M_0) =
\bigoplus_{i\in I}
\cmplx P_i ,
$
the classical part $\E_\cl$ is normally isomorphic to 
$
	(
	\ell^\infty (I) , \Theta ,  (Q_\theta )_{\thin} 
	) ,
$
where $\ell^\infty (I)$ is the set of bounded complex-valued functions on $I $
and $Q_\theta\in \Ss (\ell^\infty (I))$
is
given by
$Q_\theta (f) = \sum_{i \in I} q_\theta (i) f(i)$
$(f \in \ell^\infty (I)) .$ 
In this sense, the classical part $\E_\cl$ corresponds to 
the probability distributions
$q_\theta (i)$ appearing in the maximal decomposition \eqref{eq:KIrho}.
We can extract, in this case, the information $\E_\cl$ without disturbing $\E$
by performing the discrete projective measurement corresponding to
$(P_i)_{i \in I} .$

We remark that
the infinite-dimensional version of the Koashi-Imoto decomposition was obtained in
\cite{jencovapetz2006}
for separable $\cH$
by using the modular theory in operator algebras.

%\begin{proposition}
%\label{prop:KI}
%
%\end{proposition}

\section{Direct product of statistical experiments}
\label{sec:direct}
In this section we consider the direct product of statistical experiments 
and its classical part.

Let $\E = \seE $ and $\F = \seFx$ be statistical experiments.
We define the direct product $\E \otimes \F$ by
\[
	\E \otimes \F
	:=
	(\M \ntensor \N , \Theta \times \Xi , (\phth \ntensor \psi_\xi)_{(\theta , \xi) \in \Theta \times \Xi} ) .
\]
Operationally, 
$\E \otimes \F$
corresponds to the juxtaposition of 
two partially known systems corresponding to $\E$ and $\F .$

We first show that the direct product of 
minimal sufficient statistical experiments is also minimal sufficient.

\begin{theorem}
\label{theo:directms}
Let $\E = \seE$
and $\F = \seFx$
be statistical experiments.
Then the following conditions are equivalent:
\begin{enumerate}[(i)]
\item
$\E \otimes \F$ is minimal sufficient;
\item
$\E $ and $ \F$ are minimal sufficient.
\end{enumerate}
\end{theorem}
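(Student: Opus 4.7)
The plan is to prove both directions using the characterization from \cite{kuramochi2017minimal} that a statistical experiment is minimal sufficient iff it is faithful and every state-preserving normal CP channel on its outcome algebra is the identity. For the direction (i)$\Rightarrow$(ii), faithfulness of each factor is immediate by testing on $A \otimes \unit_\N$ (respectively $\unit_\M \otimes B$): if $A \geq 0$ with $\varphi_\theta(A) = 0$ for all $\theta$, then $(\varphi_\theta \ntensor \psi_\xi)(A \otimes \unit_\N) = 0$ and faithfulness of $\E \otimes \F$ forces $A = 0$. To show $\E$ is minimal sufficient, given $\Phi \in \ncpch{\M}$ preserving every $\varphi_\theta$, Proposition~\ref{prop:cp} supplies the normal CP channel $\Phi \ntensor \id_\N$ on $\M \ntensor \N$, which preserves every $\varphi_\theta \ntensor \psi_\xi$. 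Minimal sufficiency of $\E \otimes \F$ then forces $\Phi \ntensor \id_\N = \id$, and evaluating on $A \otimes \unit_\N$ yields $\Phi = \id_\M$; the argument for $\F$ is symmetric.

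For (ii)$\Rightarrow$(i), I first observe that faithfulness of the product family reduces to the identity $\bigvee_{\theta,\xi}(\s(\varphi_\theta) \otimes \s(\psi_\xi)) = \unit$, which follows by iterating suprema through the normal $*$-homomorphisms $B \mapsto \s(\varphi_\theta) \otimes B$ and $A \mapsto A \otimes \unit_\N$ together with faithfulness of $\E$ and $\F$. The core task is then to show that any $\Phi \in \ncpch{\M \ntensor \N}$ preserving every $\varphi_\theta \ntensor \psi_\xi$ is the identity. First, for each $\xi$ the normal CP slice map $\id_\M \otimes \psi_\xi$ composed with $\Phi$ and the injection $A \mapsto A \otimes \unit_\N$ gives a channel $\tilde\Phi_\xi \in \ncpch{\M}$ satisfying $\varphi_\theta \circ \tilde\Phi_\xi = \varphi_\theta$ for every $\theta$; minimal sufficiency of $\E$ yields $(\id_\M \otimes \psi_\xi)(\Phi(A \otimes \unit_\N)) = A$ for every $A \in \M$ and $\xi \in \Xi$.

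Setting $Y := \Phi(A \otimes \unit_\N)$, I would then chain the Schwarz inequality of $\Phi$ (applied to $A \otimes \unit_\N$) with the Schwarz inequality of $\id_\M \otimes \psi_\xi$ (applied to $Y$); since the composition begins and ends at $A^\ast A$, both inequalities are equalities. Expanding $(Y - A \otimes \unit_\N)^\ast(Y - A \otimes \unit_\N)$ and using the slice-map identity $(\id_\M \otimes \psi_\xi)(Y(A \otimes \unit_\N)) = (\id_\M \otimes \psi_\xi)(Y) \cdot A$, which holds by normality plus density of $\M \atensor \N$, yields $(\id_\M \otimes \psi_\xi)((Y - A \otimes \unit_\N)^\ast(Y - A \otimes \unit_\N)) = 0$. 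Applying $\varphi_\theta$ and using the support property of normal states gives $(Y - A \otimes \unit_\N)(\s(\varphi_\theta) \otimes \s(\psi_\xi)) = 0$ for every $(\theta,\xi)$. Since the set of projections right-annihilating a fixed operator is closed under arbitrary suprema, the faithfulness identity above forces $Y = A \otimes \unit_\N$; a symmetric argument gives $\Phi(\unit_\M \otimes B) = \unit_\M \otimes B$. Once $\Phi$ fixes $\M \otimes \unit_\N$ and $\unit_\M \otimes \N$, Schwarz equality places both in the multiplicative domain $\M_\Phi$, so Lemma~\ref{lemm:mdomain} yields $\Phi(A \otimes B) = (A \otimes \unit_\N)(\unit_\M \otimes B) = A \otimes B$, and normality plus ultraweak density of $\M \atensor \N$ in $\M \ntensor \N$ give $\Phi = \id$. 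The main obstacle is the step from Schwarz equality to support annihilation: one must push the conclusion $(Y - A \otimes \unit_\N)(\s(\varphi_\theta) \otimes \s(\psi_\xi)) = 0$ up to $\unit$ via a general projection-lattice supremum argument rather than a monotone net limit, because the parameter sets $\Theta$ and $\Xi$ are arbitrary and the supports need not be directed.
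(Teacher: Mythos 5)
Your proof is correct, and while the easy direction (i)$\Rightarrow$(ii) matches the paper's argument (the paper tensors $\Phi$ with $\Psi$ simultaneously rather than with $\id_\N$, an immaterial difference), your proof of (ii)$\Rightarrow$(i) takes a genuinely different route. The paper invokes the mean ergodic theorem to produce the minimal sufficient subalgebra $\M_0 \subseteq \M \ntensor \N$ together with the state-preserving conditional expectation $\condi$, and then shows $\M \ntensor \cmplx\, \s(\psi_\xi) \subseteq \M_0$ by compressing to $\M \ntensor (\s(\psi_\xi)\N\s(\psi_\xi))$ and appealing to the characterization of minimal sufficient subalgebras; you instead verify the channel characterization of minimal sufficiency directly, taking an arbitrary state-preserving $\Phi \in \ncpch{\M\ntensor\N}$ and proving $\Phi = \id$ by a Schwarz-equality ("zero variance") argument with the slice maps $\id_\M \ntensor \psi_\xi$. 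The key steps all check out: the squeeze $A^\ast A \le (\id_\M\ntensor\psi_\xi)(Y^\ast Y) \le A^\ast A$ forces $(\id_\M\ntensor\psi_\xi)\bigl((Y-A\otimes\unit_\N)^\ast(Y-A\otimes\unit_\N)\bigr)=0$ via the Tomiyama module property of the slice map; the support property and $\s(\phth\ntensor\psx)=\s(\phth)\otimes\s(\psx)$ give right-annihilation by each $\s(\phth)\otimes\s(\psx)$; and since $ZP=0$ is equivalent to $\mathrm{ran}\,P\subseteq\ker Z$, the annihilating projections are indeed closed under arbitrary (not necessarily directed) suprema, so faithfulness of the product family finishes the job. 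The multiplicative-domain step via Lemma~\ref{lemm:mdomain} and ultraweak density then recovers $\Phi=\id$. What your approach buys is that it bypasses the mean ergodic theorem and the uniqueness theory of minimal sufficient subalgebras entirely, using only the channel characterization, Schwarz inequalities, and the support factorization; what the paper's approach buys is that the conditional expectation $\condi$ is canonical and the compression argument localizes cleanly to each $\xi$, which fits the subalgebra-centric machinery developed in \cite{kuramochi2017minimal}. Both proofs share the same two essential external inputs: $\s(\phth\ntensor\psx)=\s(\phth)\otimes\s(\psx)$ and a supremum-of-projections argument over the parameter sets.
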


\begin{proof}
(i)$\implies$(ii).
Assume (i).
Let $\Phi \in \ncpch{\M}$ and $\Psi \in \ncpch{\N}$ be normal channels 
satisfying 
$\phth \circ \Phi = \phth$
$(\thin)$
and
$\psi_\xi \circ \Psi = \psi_\xi$
$(\xi \in \Xi) .$
Then we have
\[
(\phth \ntensor \psi_\xi) \circ (\Phi \ntensor \Psi)
=
(\phth \circ \Phi) \ntensor (\psi_\xi \circ \Psi)
=
\phth \ntensor \psi_\xi
\]
for each $( \theta , \xi) \in \Theta \times \Xi .$ 
Thus from the minimal sufficiency of $\E \otimes \F,$
we have $\Phi \ntensor \Psi = \id_{\M \ntensor \N } ,$
which implies 
$\Phi = \id_{\M}$ and 
$\Psi = \id_\N .$
Therefore $\E$ and $\F$ are minimal sufficient.

(ii)$\implies$(i).
Assume (ii).
Since we have 
$\s (\phth \ntensor \psi_\xi) = \s (\phth) \otimes \s (\psi_\xi) $
(\cite{takesakivol1}, Corollary~IV.5.12),
the family $(\phth \ntensor \psi_\xi)_{(\theta , \xi) \in \Theta \times \Xi}$
is faithful on $\M \ntensor \N .$
Therefore, by applying the mean ergodic theorem for von Neumann algebras,
there exists a minimal sufficient subalgebra $\M_0 \subseteq \M \ntensor \N$
with respect to $(\phth \ntensor \psx)_{(\theta , \xi) \in \Theta \times \Xi  }$
and a normal conditional expectation 
$\condi$ from $\M \ntensor \N $ onto $\M_0$ such that
$
	( \phth \ntensor \psx ) \circ \condi
	=
	\phth \ntensor \psx 
$
$
	(\thin , \xin) .
$
We will show $\M \ntensor \N \subseteq \M_0 ,$
which implies the condition~(i).

First, we fix arbitrary $\xin$
and write $\s_\xi := \s (\psx) $
and $\ts_\xi := \unit_\M \otimes \s (\psx) .$
Consider the following faithful statistical experiment:
\[
	\widetilde{\E}_\xi
	:=
	(
	\M \ntensor (\s_\xi \N \s_\xi) ,
	\Theta ,
	(\phth \ntensor \psx)_{\thin}
	) .
\]
Define a map
$
\condi_\xi 
\colon 
\M \ntensor (\s_\xi \N \s_\xi)  
\to
\M \ntensor \cmplx \s_\xi
%(= \M \otimes \s_\xi)
$
by the normal extension of
$
	\condi_\xi
	(A \otimes B)
	=
	A \otimes \psx (B)  \s_\xi
$
$(A \in \M , B \in \s_\xi \N \s_\xi) .$
Then $\condi_\xi$ is a normal conditional expectation from 
$\M \ntensor (\s_\xi \N \s_\xi)  $
onto
$\M \ntensor \cmplx \s_\xi$
and satisfies
\[
	(\phth \ntensor \psx)
	\circ
	\condi_\xi
	=
	\phth \ntensor \psx
\]
for all $\thin .$
Since the statistical experiment
\[
	(
	\M \ntensor \cmplx \s_\xi,
	\Theta ,
	(\phth \ntensor \psx)_{\thin}
	)
\]
is normally isomorphic to $\E ,$
the assumption of the minimal sufficiency of $\E$ implies that
$\M \ntensor \cmplx \s_\xi$ is the minimal sufficient subalgebra of
$\M \ntensor ( \s_\xi \N \s_\xi )$
with respect to the family
$(\phth \ntensor \psx)_{\thin} $
(\cite{kuramochi2017minimal}, Theorem~3).
Therefore for a normal channel $\Gamma \in \ncpch{\M \ntensor (\s_\xi \N \s_\xi)} ,$
$(\phth \ntensor\psx ) \circ \Gamma (X) = (\phth \ntensor \psx) (X)  $
$(X \in \M \ntensor (\s_\xi \N \s_\xi)    , \,  \thin   )$
implies
$\Gamma (A \otimes \s_\xi)
=
A \otimes \s_\xi $
$( A \in \M ) .$

%From Lemma~\ref{lemm:supp}, 
Since $\s (\phth) \otimes \s (\psx) = 
\s (\phth \ntensor \psx) \in \M_0$
for each $(\theta , \xi) \in \Theta \times \Xi$
(\cite{thomsen1985invariant}, Lemma~1),
we have 
$\ts_\xi = \bigvee_{\thin} \s (\phth) \otimes \s (\psx) \in \M_0 .$
Therefore we obtain
\begin{gather*}
	\condi (\ts_\xi) = \ts_\xi ,
	\\
	\condi (X)
	=
	\condi (
	\ts_\xi
	X
	\ts_\xi
	)
	=
	\ts_\xi 
	\condi (X)
	\ts_\xi 
	\in
	\M \ntensor (\s_\xi \N \s_\xi)
	\quad
	(
	X \in \M \ntensor (\s_\xi \N \s_\xi)
	) .
\end{gather*}
Thus if we define
$\Lambda_\xi$ by the restriction of $\condi$ to
$\M \ntensor (\s_\xi \N \s_\xi) ,$
$\Lambda_\xi$
is a normal channel on $\M \ntensor (\s_\xi \N \s_\xi) .$
Then
$
(\phth \ntensor \psx) \circ \Lambda_\xi (X)
=
(\phth \ntensor \psx) \circ \condi (X)
=
(\phth \ntensor \psx )(X)
$
for each $\thin $ and each $X \in \M \ntensor (\s_\xi \N \s_\xi)  .$
%we have $\Lambda_\xi \circ \condi_\xi = \condi_\xi .$
Therefore for each $A \in \M$ we have
\[%\begin{align*}
	\M_0 \ni
	\condi (A \otimes \s_\xi)
	=
	\Lambda_\xi 
	(A \otimes \s_\xi)
	=
	A \otimes \s_\xi .
\]%\end{align*}
%where we have used Tomiyama's theorem
%in the second equality.
Hence we obtain
$\M \ntensor \cmplx \s_\xi \subseteq \M_0 $
for each $\xin .$
Thus for each projection $P \in \M $
we have
\[
	P \otimes \unit_\N
	=
	\bigvee_{\xin}
	P \otimes \s_\xi
	\in
	\M_0 ,
\]
which implies
$\M \otimes \unit_\N \subseteq \M_0 .$
We can also show 
$\unit_\M \otimes \N \subseteq \M_0 $
analogously.
Thus we finally obtain 
$\M \ntensor \N \subseteq \M_0,$
which completes the proof.
\qed
\end{proof}

\begin{remark}
\label{rem:direct}
A corresponding result of Theorem~\ref{theo:directms} 
for dominated families of probability measures
was obtained in~\cite{10.2307/24307476}
and for families of finite-dimensional density operators
in~\cite{PhysRevA.66.022318} (Theorem~4).
\end{remark}

By using Theorem~\ref{theo:directms},
we show that the classical part of a direct product of statistical experiments
is the direct product of the classical parts of the statistical experiments:

\begin{theorem}
\label{theo:directcl}
Let $\E = \seE$
and $\F = \seFx$
be statistical experiments.
Then
$(\E \otimes \F)_\cl \cong \E_\cl \otimes \F_\cl .$
\end{theorem}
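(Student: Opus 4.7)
The plan is to pass to minimal sufficient representatives on both factors, invoke Theorem~\ref{theo:directms} to transfer minimal sufficiency to the direct product, and then identify centers via the classical commutation theorem for normal tensor products.

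To this end, let $\E_0 = \seEz$ and $\F_0 = \seFxz$ be the minimal sufficient statistical experiments normally CP equivalent to $\E$ and $\F$, respectively. Pick normal CP channels $\Phi \in \ncpchset{\M}{\M_0}$, $\Phi^\prime \in \ncpchset{\M_0}{\M}$, $\Psi \in \ncpchset{\N}{\N_0}$, and $\Psi^\prime \in \ncpchset{\N_0}{\N}$ witnessing $\E \eqncp \E_0$ and $\F \eqncp \F_0$. Proposition~\ref{prop:cp} produces the normal tensor extensions $\Phi \ntensor \Psi$ and $\Phi^\prime \ntensor \Psi^\prime$; evaluating on elementary tensors one obtains $(\phthz \ntensor \psi_\xi^{(0)}) \circ (\Phi \ntensor \Psi) = \phth \ntensor \psx$ together with its mirror identity, so $\E \otimes \F \eqncp \E_0 \otimes \F_0$. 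Theorem~\ref{theo:directms} shows that $\E_0 \otimes \F_0$ is minimal sufficient, and uniqueness of the minimal sufficient representative up to normal isomorphism yields $(\E \otimes \F)_0 \cong \E_0 \otimes \F_0$. By Definition~\ref{defi:clpart}, $(\E \otimes \F)_\cl$ is therefore normally isomorphic to the restriction of $\E_0 \otimes \F_0$ to $\zent(\M_0 \ntensor \N_0)$.

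The remaining ingredient is the identification $\zent(\M_0 \ntensor \N_0) = \zent(\M_0) \ntensor \zent(\N_0)$, which follows from the commutation theorem $(\M_0 \ntensor \N_0)^\prime = \M_0^\prime \ntensor \N_0^\prime$ (see~\cite{takesakivol1}). Under this identification, the restriction of $\phthz \ntensor \psi_\xi^{(0)}$ to $\zent(\M_0) \ntensor \zent(\N_0)$ coincides with $\phth^\cl \ntensor \psi_\xi^\cl$: both are normal states on $\zent(\M_0) \ntensor \zent(\N_0)$ and they agree on the algebraic tensor product $\zent(\M_0) \atensor \zent(\N_0)$, which is ultraweakly dense. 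This produces the desired normal isomorphism $(\E \otimes \F)_\cl \cong \E_\cl \otimes \F_\cl$. The main obstacle is verifying the central commutation identity in the required generality (arbitrary, possibly non-$\sigma$-finite, von Neumann algebras), but this is a classical fact already recorded in~\cite{takesakivol1}.
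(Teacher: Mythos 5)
Your proposal is correct and follows essentially the same route as the paper: pass to the minimal sufficient representatives $\E_0$ and $\F_0$, apply Theorem~\ref{theo:directms} to conclude that $\E_0 \otimes \F_0$ is the minimal sufficient representative of $\E \otimes \F$, and identify $\zent(\M_0 \ntensor \N_0)$ with $\zent(\M_0) \ntensor \zent(\N_0)$ via the commutation theorem (the paper cites Corollary~IV.5.11 of \cite{takesakivol1} for exactly this). The extra details you supply (the explicit tensor-product channels from Proposition~\ref{prop:cp} and the ultraweak-density argument identifying the restricted states) are correct elaborations of steps the paper leaves implicit.
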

\begin{proof}
Let 
$\E_0 = \seEz$ and 
$\F_0 = \seFxz$ be the minimal sufficient statistical 
experiments satisfying 
$\E \eqncp \E_0$
and
$\F \eqncp \F_0 ,$
and let 
$\E_\cl = (\zent (\M_0 ) , \Theta , (\vph^{\cl}_\theta)_{\thin})$
and 
$\F_\cl = (\zent (\N_0 ) , \Xi , (\psi^{\cl}_\xi)_{\xi \in \Xi})$
be the classical parts of $\E$ and $\F ,$ respectively.
Since we have $\E \otimes \F \eqncp \E_0 \otimes \F_0 ,$
Theorem~\ref{theo:directms} implies that 
$\E_0 \otimes \F_0$ is the minimal sufficient statistical experiment
normally CP equivalent to $\E \otimes \F .$
Therefore
\[
	(\E \otimes \F)_\cl
	\cong
	\left(
	\zent (\M_0 \ntensor \N_0) , \,
	\Theta \times \Xi , \,
	\left(
	\lambda_{\theta , \xi}
	\right)_{(\theta , \xi) \in \Theta \times \Xi}
	\right) ,
\]
where $\lambda_{\theta ,\xi}$ is the restriction of 
$\phthz \ntensor \psi^{(0)}_\xi$ to the center 
$\zent (\M_0 \ntensor \N_0 ) .$
On the other hand, it is known that
$\zent (\M_0 \ntensor \N_0) = \zent (\M_0)\ntensor \zent (\N_0)$
(\cite{takesakivol1}, Corollary~IV.5.11).
Thus $\lambda_{\theta ,\xi}$ coincides with 
$\vph^{\cl}_\theta \ntensor \psi^{\cl}_\xi .$
Therefore 
\[
	(\E \otimes \F)_{\cl}
	\cong
	\left(
	\zent (\M_0) \ntensor \zent( \N_0) , \,
	\Theta \times \Xi , \,
	\left(
	\vph^{\cl}_\theta \ntensor \psi^{\cl}_\xi 
	\right)_{(\theta , \xi) \in \Theta \times \Xi}
	\right) 
	=
	\E_{\cl} \otimes \F_{\cl} ,
\]
which completes the proof.
\qed
\end{proof}

\section{Concluding remark}\label{sec:conclusion}
In this paper we have given an operational meaning
of the classical part $\E_\cl$ of a statistical experiment $\E .$
Namely, $\E_\cl$ is the accessible part of $\E$ when we do not disturb $\E .$
In the formulation of the accessibility relations, 
we have identified 
the outcome composite system with 
the algebraic or a \cstar-tensor product of the outcome algebras.
In the von Neumann algebra setup considered in this paper,
it is also possible to identify the outcome composite system with 
the \emph{normal} tensor product.
The accessibility relation defined by the normal tensor product
%which we write as $\clp_\sigma^{X}$
%$(X = \Sch \text{ or } \CP) ,$
is more restrictive than the accessibility relation $\clp $ considered in this paper,
similarly as the notion of broadcastability presented in 
\cite{Kaniowski2015} is more restrictive than that defined here.
Then it is natural to ask how these accessibility relations differ,
mathematically and operationally.
We leave this as an open question.

%%%%%%%%%%%%%%APPENDIX%%%%%%%%%%%%%%
\appendix
\section{Proofs of Lemmas~\ref{lemm:nex} and \ref{lemm:coeq}}
\label{sec:appendix}
In this appendix, we give direct proofs of Lemmas~\ref{lemm:nex} and \ref{lemm:coeq}.
%For this purpose, we first show that a (possibly non-normal) 
%CP map with an outcome von Neumann algebra extends to 
%a normal CP map.
%\begin{lemma}
%\label{lemm:nexcp}
%Let $\A$ be a \cstar-algebra,
%let $\N$ be a von Neumann algebra,
%and let 
%$\Lambda \colon \A \to \N$
%be a CP map.
%Then $\Lambda$ extends to a normal CP map
%\end{lemma}

\noindent
\textit{Proof of lemma~\ref{lemm:nex}.}
From the universality of $\M^{\ast\ast} ,$
there exists a normal representation
$\widetilde{\pi} \colon \M^{\ast \ast} \to \M$
satisfying
$\widetilde{\pi} \circ \pi_\M = \id_\M .$
Then for each $A \in \M$ and each $\thin$ we have 
\[
	\overline{\phth} (\pi_\M (A))
	=
	\phth (A)
	=
	\phth \circ \widetilde{\pi} (\pi_\M (A)) .
\]
Since $\pi_\M (\M)$ is ultraweakly dense in $\M^{\ast\ast}$
and $\overline{\phth}$ and $\phth \circ \widetilde{\pi}$ are normal,
we obtain $\overline{\phth} = \phth \circ \widetilde{\pi} $
for each $\thin .$
Hence $\overline{\E} \concp \E .$
On the other hand, 
since the kernel $\ker \widetilde{\pi}$ is a
self-adjoint, ultraweakly closed,
two-sided ideal on $\M^{\ast\ast} ,$
there exists a central projection
$\widetilde{P} \in \zent (\M^{\ast\ast})$
such that 
$\ker \widetilde{\pi} = \widetilde{P}^\perp \M^{\ast\ast} , $
where
$\widetilde{P}^\perp := \unit_{\M^{\ast\ast}} - \widetilde{P} .$
Then the restriction 
$\widetilde{\pi} \rvert_{\widetilde{P} \M^{\ast\ast}}$
of $\widetilde{\pi}$ to the 
(possibly non-unital) von Neumann subalgebra
$\widetilde{P}\M^{\ast\ast} $ is a normal isomorphism
from 
$\widetilde{P}\M^{\ast\ast} $
onto
$\widetilde{\pi} (\M^{\ast \ast}) = \M.$ 
We write the inverse of 
$\widetilde{\pi} \rvert_{\widetilde{P} \M^{\ast\ast}}$
as $\rho \colon \M \to \widetilde{P} \M^{\ast\ast} .$
We define a normal CP channel 
$\Phi \in \ncpchset{\M}{\M^{\ast\ast}}$
by
$
\Phi (A) := \rho (A) + \phi_0 (A) \widetilde{P}^\perp 
$
$(A \in \M) ,$
where $\phi_0$ is a fixed normal state on $\M .$
Then for each $A \in \M $ and each $\thin ,$ we have
\[
	\overline{\phth} \circ \Phi (A)
	=
	\phth \circ \widetilde{\pi} \circ \Phi (A)
	=
	\phth \left(
	\widetilde{\pi} \circ \rho (A)
	+
	\phi_0 (A) \widetilde{\pi} (\widetilde{P}^\perp)
	\right)
	=
	\phth (A) ,
\]
which implies 
$\E \concp \overline{\E} .$
Thus we have shown 
$\E \eqncp \overline{\E} .$
\qed

\noindent
\textit{Proof of Lemma~\ref{lemm:coeq}.}
We have only to prove the implication
$\E \cocp \F \implies \E \concp \F .$
Assume $\E \cocp \F$ and take a channel 
$\Phi \in \cpchset{\M}{\N}$ such that
$\phth = \psth \circ \Phi$
for all $\thin .$
Let $\Phi^{\ast\ast} \colon \M^{\ast\ast} \to \N^{\ast \ast}$
be the double dual map of $\Phi .$
Then $\Phi^{\ast\ast}$ is an extension of $\Phi$ 
that is continuous in the ultraweak topologies
of $\M^{\ast\ast}$
and 
$\N^{\ast\ast} ,$
respectively.

Now we show that $\Phi^{\ast\ast}$ is CP.
For this, we have to show that 
for each integer $n \geq 1$
and each 
$\{A_i^{\prime\prime}\}_{i=1}^n \subseteq \M^{\ast\ast} ,$
the $n\times n$ matrix
$(\Phi^{\ast \ast} (A_i^{\prime\prime \ast} A_j^{\prime\prime}))_{i,j =1}^n$
is positive.
By the Kaplansky density theorem, we can take nets
$(A_{i\alpha})$ on $\M$ satisfying
$\norm{A_{i\alpha}} \leq \norm{A_i^{\prime\prime}}$
and 
$\pi_\M (A_{i\alpha} ) \xrightarrow{us\ast} A^{\prime\prime}_i $
$(i = 1, \dots , n) ,$
where $\xrightarrow{us\ast}$ denotes the ultrastrong$\ast$
convergence on $\cH_\M .$
Then we have 
$\pi_\M ( A_{i\alpha}^{\ast} A_{j\alpha} ) 
\xrightarrow{us\ast}
A_i^{\prime\prime \ast} A_j^{\prime\prime}
$
and hence
$
\pi_\N \circ
\Phi (A_{i\alpha}^{\ast} A_{j\alpha} ) 
=
\Phi^{\ast\ast} \circ \pi_\M (A_{i\alpha}^{\ast} A_{j\alpha} )
\xrightarrow{uw}
\Phi^{\ast\ast}(
A_i^{\prime\prime \ast} A_j^{\prime\prime}
) ,
$
where $\xrightarrow{uw}$ denotes the ultraweak convergence.
Hence the matrix
$(\Phi^{\ast \ast} (A_i^{\prime\prime \ast} A_j^{\prime\prime}))_{i,j =1}^n$
is an ultraweak limit of the net of positive matrices
$(\pi_\N \circ \Phi(A_{i\alpha}^{\ast} A_{j\alpha}))_{i,j =1}^n ,$
and therefore is also positive.
Thus $\Phi^{\ast\ast}$ is a normal CP channel.

We denote the normal extensions of
$\E$ and $\F$
by
$\overline{\E} = (\M^{\ast\ast} , \Theta , (\overline{\phth})_{\thin})$
and
$\overline{\F} = (\N^{\ast\ast} , \Theta , (\overline{\psth})_{\thin}) , $
respectively.
Then for each $A \in \M $ we have
\[
	\overline{\psth} \circ \Phi^{\ast\ast}
	(\pi_\M (A))
	=
	\overline{\psth}
	(
	\pi_\N (
	\Phi (A)
	)
	)
	=
	\psth \circ \Phi (A)
	= 
	\phth (A)
	=
	\overline{\phth}
	(\pi_\M (A)) .
\]
Since $\pi_\M (\M)$ is ultraweakly dense in
$\M^{\ast\ast} ,$
this implies 
$\overline{\phth} = \overline{\psth} \circ \Phi^{\ast\ast} $
$(\thin) ,$
and hence $\overline{\E} \concp \overline{\F} .$
Thus from Lemma~\ref{lemm:nex},
we obtain $\E \concp \F ,$
which completes the proof.
\qed

\begin{acknowledgements}
The author would like to thank Takayuki Miyadera for 
helpful discussions.
He is also grateful to Erkka Haapasalo 
for helpful comments on the first version of this paper.
\end{acknowledgements}

% BibTeX users please use one of
%\bibliographystyle{spbasic}      % basic style, author-year citations
%\bibliographystyle{spmpsci}      % mathematics and physical sciences
%\bibliographystyle{spphys}       % APS-like style for physics
%\bibliography{abbr}   % name your BibTeX data base

\end{document}